\documentclass[hyperref,envcountsame]{llncs}

\usepackage[svgnames]{xcolor}
\usepackage{amssymb}
\usepackage{amsmath}
\usepackage[mathcal]{euscript}
\usepackage[T1]{fontenc}
\usepackage[latin1]{inputenc}
\usepackage{ae,aecompl}
\usepackage{enumitem}
\usepackage{float}
\usepackage{xspace}
\usepackage{bm}
\usepackage{microtype}
\usepackage{pdfsync}
\usepackage[active]{srcltx}
\usepackage{relsize}
\usepackage{tikz}
\usetikzlibrary{trees,arrows,shapes,snakes,fit,shadows,decorations.text,decorations.pathmorphing,calc}

\let\doendproof\endproof
\newtheorem{clm}{Claim}
\renewcommand\endproof{~\hfill\qed\doendproof}

\let\leq\leqslant

\setenumerate[1]{label=$(\arabic*)$,leftmargin=*,topsep=0.3ex}
\setenumerate[2]{label=$(\alph*)$,leftmargin=*}
\setenumerate[3]{label=$(\roman*)$,leftmargin=*}
\setitemize[1]{label=$-$,leftmargin=*,topsep=0.3ex}
\setitemize[2]{label=$\circ$,leftmargin=*}
\setitemize[3]{label=$--$,leftmargin=*}

\newtheorem{myremark}{Remark}

\DeclareSymbolFont{bbold}{U}{bbold}{m}{n}
\DeclareSymbolFontAlphabet{\mathbbold}{bbold}
\newcommand\BM[1]{{\underline{\ensuremath{\boldsymbol #1}}}}

\newcommand\Sep{\ensuremath{\mathsf{Sep}}\xspace}

\newcommand\content[1]{\ensuremath{\contentmorphism(#1)}}
\newcommand\contentscc[2]{\ensuremath{\contentmorphism\_\mathsf{scc}(#1,#2)}}

\newcommand\contentmorphism{\ensuremath{\textsf{alph}}}
\newcommand\scc[2]{\ensuremath{\textsf{scc}(#1,#2)}}

\newcommand{\PT}{\ensuremath{\textsl{PT}}\xspace}
\newcommand{\kPT}[1]{\ensuremath{PT[#1]}\xspace}
\newcommand{\UL}{\ensuremath{\textsl{UL}}\xspace}

\newcommand{\ptclos}[2]{\ensuremath{[#1]_{\sim#2}}}
\newcommand{\fdclos}[2]{\ensuremath{[#1]_{\cong#2}}}

\newcommand{\bsi}{\ensuremath{\mathcal{B}\Sigma_{1}(<)}\xspace}
\newcommand{\siu}{\ensuremath{\Sigma_{1}(<)}\xspace}


\newcommand{\dew}{\ensuremath{\Delta_{2}(<)}\xspace}

\newcommand{\siw}{\ensuremath{\Sigma_{2}(<)}\xspace}


\newcommand\fodw{\ensuremath{\textup{FO}^2(<)}\xspace}

\newcommand\nat{\ensuremath{\mathbb{N}}\xspace}

\newcommand\pata[1]{\ensuremath{#1}-pattern\xspace}
\newcommand\patas[1]{\ensuremath{#1}-patterns\xspace}

\newcommand\kpata{\pata{k}}
\newcommand\kpatas{\patas{k}}

\newcommand\patt[2]{\ensuremath{(#1,#2)}-pattern\xspace}
\newcommand\patts[2]{\ensuremath{(#1,#2)}-patterns\xspace}

\newcommand\temp[1]{\ensuremath{#1}-template\xspace}
\newcommand\temps[1]{\ensuremath{#1}-templates\xspace}
\newcommand\ltemp{\temp{\ell}}
\newcommand\ltemps{\temps{\ell}}

\newcommand\impl[1]{\ensuremath{#1}-implementation\xspace}
\newcommand\impls[1]{\ensuremath{#1}-implementations\xspace}

\newcommand\kimpl{\impl{p}}
\newcommand\kimpls{\impls{p}}

\newcommand\impv[1]{\ensuremath{#1}-implements\xspace}
\newcommand\kimpv{\impv{p}}

\newcommand\peq[1]{\ensuremath{\sim_{#1}}\xspace}
\newcommand\kpeq{\peq{\kappa}}

\newcommand\fodeq[1]{\ensuremath{\cong_{#1}}\xspace}
\newcommand\kfodeq{\fodeq{\kappa}}

\newcommand\decop[1]{\ensuremath{#1}-decomposition\xspace}
\newcommand\decops[1]{\ensuremath{#1}-decompositions\xspace}
\newcommand\kdecop{\decop{k}}
\newcommand\kdecops{\decops{k}}

\def\pv#1{\ensuremath{{\sf#1}}}

\def\paragraph#1{\smallskip\noindent{\bfseries\itshape{#1}}}

\pagestyle{headings}

\begin{document}
\title{Separating regular languages by piecewise
  testable and unambiguous languages}
\titlerunning{The separation problem for regular languages}

\author{Thomas~Place, Lorijn~van Rooijen \and Marc~Zeitoun}

\institute{LaBRI, Universit\'es de Bordeaux \& CNRS \\
  UMR~5800. 351 cours de la Lib\'eration, 33405 Talence Cedex, France.\\
  \email{tplace@labri.fr}, \email{lvanrooi@labri.fr}, \email{mz@labri.fr}
}

\maketitle

\begin{abstract}
  Separation is a classical problem asking whether, given two sets
  belonging to some class, it is possible to separate them by a set
  from a smaller class. We discuss the separation problem for regular
  languages. We give a \textsc{Ptime} algorithm to check whether two
  given regular languages are separable by a piecewise testable
  language, that is, whether a $\mathcal{B}\Sigma_1(<)$ sentence can
  witness that the languages are disjoint. The proof refines an
  algebraic argument from Almeida and the third author. When
  separation is possible, we also express a separator by saturating
  one of the original languages by a suitable congruence. Following
  the same line, we show that one can as well decide whether two
  regular languages can be separated by an
  unambiguous language, albeit with a higher complexity.
\end{abstract}

\section{Introduction}
\label{sec:introduction}

Separation is a classical
notion in mathematics and computer science. In general, one says
that two structures $L_1,L_2$ from a class $\mathcal{C}$ are \emph{separable} by a set $L$ if $L_1\subseteq
L$ and $L_2\cap L=\varnothing$. In this case, $L$ is called a
\emph{separator}. 

In separation problems, the separator $L$ is required to belong to a given subclass \Sep of $\mathcal{C}$. The problem asks whether two disjoint elements $L_1, L_2$ of
$\mathcal{C}$ can always be separated by an element of the subclass
\Sep.

In the case that disjoint elements of $\mathcal{C}$ cannot always be separated
by an element of \Sep, several natural questions arise:
\begin{enumerate}
\item\label{item:1} given elements $L_1, L_2$ in $\mathcal{C}$, can we decide whether a separator exists in~\Sep?
\item if so, what is the complexity of this decision problem?
\item\label{item:3} can we, in addition, compute a separator, and what is the complexity?
\end{enumerate}
In this context, it is known for example that separation of two
context-free languages by a regular one is
undecidable~\cite{Hunt:1982:DGP:322307.322317}.  

\paragraph{Separating regular languages.}
In this paper, we look at separation problems for the class
$\mathcal{C}$ of regular languages. 
These separation problems generalize the task of finding decidable characterizations for 
subclasses \Sep of $\mathcal{C}$ which are closed under complement: a separation algorithm for a subclass \Sep 
entails an algorithm for deciding membership in \Sep (\emph{i.e.}, membership
reduces to separability). Indeed, membership in
\Sep can be checked by testing whether the input language
is \Sep-separable from its complement. 

While finding a decidable characterization for \Sep already requires a
deep understanding of the subclass, the search for separation
algorithms is intrinsically more difficult.  Indeed,
 powerful tools are already available to decide membership in \Sep: one normally makes use of a recognizing device of the input language, \emph{viz}.~its syntactic monoid. A famous result along these lines is Sch\"utzenberger's Theorem~\cite{saperiodic}, which
states that a language is definable in first-order logic if and only if its
syntactic monoid is aperiodic, which is easily decidable.

Now for a separation algorithm, the question is whether the input languages are \emph{sufficiently different}, from the point of view of the subclass \Sep, to allow this to be witnessed by an element of \Sep. 
Note that we cannot use standard methods on the recognizing devices,
as was the case for the membership problem. We now have to decide
whether there \emph{exists} a recognition device of the given type
that separates the input: we do not have it in hand, nor its syntactic monoid.
An even harder question then is to actually construct the so-called separator in \Sep.  

\paragraph{\textbf{Contributions.}}
In this paper, we study this problem for two subclasses of the regular
languages: piecewise testable languages and unambiguous languages.

Piecewise testable languages are languages that can be described by the presence or absence of scattered subwords up to a certain size
within the words. Equivalently, these are the languages 
definable using \bsi formulas, that is, first-order logic formulas that are boolean
combinations of \siu formulas. A \siu formula is 
a first-order formula with a quantifier prefix $\exists^{*}$. A well-known result about piecewise
testable languages is Simon's Theorem~\cite{Simon:1975:PTE:646589.697341} that states that a
regular language is piecewise testable if and only if its syntactic monoid is
$\mathcal{J}$-trivial. This property yields a decision procedure to check  whether a language is piecewise testable. Stern has refined this procedure into a polynomial time algorithm~\cite{Stern:Complexity-some-problems-from:1985:a}, of which the complexity has been improved by Trahtman~\cite{Trahtman:Piecewise-Local-Threshold-Testability:2001:b}.

The second class that we consider is the class of unambiguous
languages, \emph{i.e.}, languages defined by unambiguous products. This class has been given many equivalent
characterizations~\cite{Tesson02diamondsare}. For example, these
are the \fodw-definable languages, \emph{i.e.},~languages that can be
defined in first-order logic using only two variables. Equivalently,
this is the class \dew of languages that are definable by a
first-order formula with a quantifier prefix $\exists^{*}\forall^{*}$
and simultaneously by a first-order formula with a quantifier prefix
$\forall^{*}\exists^{*}$. Note that consequently, all piecewise
testable languages are \fodw-definable. It has been shown
in~\cite{pwdelta} for \dew, and in~\cite{twfodeux} for \fodw that these
are exactly the languages whose syntactic monoid belongs to the decidable class~$\pv{DA}$.

\smallskip
There is a common difficulty in the separation problems for these two
classes. \emph{A priori}, it is not known up to which level one should
proceed in refining the candidate separators to be able to answer the
question of separability. For piecewise testable languages, this
refinement basically means increasing the size of the considered
subwords. For unambiguous languages this means increasing the
size of the unambiguous products. 
For both of these classes, we are able to compute, from the two input languages, a number that suffices for this purpose. 
This entails decidability of the separability problem for both
classes.

A rough analysis yields a 3-\textsc{Nexptime} upper bound for
separation by unambiguous languages. For separability by piecewise
testable languages, we obtain a better bound starting from NFAs: we show that two
languages are separable if and only if the corresponding automata do not contain
certain forbidden patterns of the same type, and we prove that
the presence of such patterns can be decided in polynomial time
wrt.~the size of the automata and of the alphabet. This yields a
\textsc{Ptime} algorithm for deciding separation by a piecewise
testable language.

\paragraph{\textbf{Related work.}} 
The classes of piecewise testable and unambiguous languages are varieties of regular languages. For such varieties, there is a generic
connection found by Almeida~\cite{MR1709911} between profinite
semigroup theory and the separation problem: Almeida has shown that two regular languages over $A$
are separable by a language of a variety $A^*\mathcal{V}$ if and only if the
topological closures of these two languages inside a profinite
semigroup, depending only on $A^*\mathcal{V}$, intersect. 
Note that this theory does not give any information about how to actually \emph{construct} the separator, in case two languages are separable. 
To turn Almeida's result into an algorithm deciding separability, we 
should  compute representations of these topological closures, and test for emptiness of intersections of such~closures. 

So far, these problems have no generic answer and have been studied in an algebraic context
for a small number of specific varieties. Deciding whether the closures of two regular languages
intersect is equivalent to computing the so-called 2-pointlike sets of
a finite semigroup wrt.~the considered variety,
see~\cite{MR1709911}. This question has been answered positively 
for the varieties of finite group languages~\cite{ash:1991:a,ribes&zalesskii:1993:a}, piecewise testable languages~\cite{MR1611659,MR2365328}, star-free languages~\cite{DBLP:journals/ijac/HenckellRS10a,Henckell:1988}, 
and a few other varieties, but it was left open for unambiguous languages. 

A general issue is that the topological closures may not be describable
by a finite device. However, for 
piecewise testable languages, the approach of~\cite{MR1611659}
builds an automaton over an extended alphabet, which
recognizes the closure of the original language. This can be performed
in polynomial time wrt.\ the size of the original automaton. Since
these automata admit the usual construction for intersection and can
be checked for emptiness in \textsc{Nlogspace}, this gives a polynomial time
algorithm wrt.~the size of the original automata. The construction was
presented for deterministic automata but also works for
nondeterministic ones. One should mention that the extended alphabet
is $2^A$ (where $A$ is the original alphabet). 
Therefore, these results give an algorithm which, from two NFAs,
decides separability by piecewise testable languages in time
polynomial in the number of states of the NFAs and exponential in the
size of the original~alphabet. 

Our proof for separability by piecewise testable languages follows the same pattern as the method described above, but a significant improvement is that we show that non-separability is witnessed by both
automata admitting a path of the same shape. This allows us to present an algorithm that provides better complexity as it runs in
polynomial time in both the size of the automata, \emph{and} in the
size of the alphabet. Also, we do not make use
of the theory of profinite semigroups: we work only with
elementary concepts. We have described this algorithm in~\cite{vRZ}. 
Furthermore, we show how to compute from the input languages, an index that suffices to separate them. 
Recently, Martens et.~al.~\cite{martens} also provided a \textsc{Ptime} algorithm for deciding separability by piecewise testable
languages, using different proofs but do not provide the
computation of such an index.

Finally, for separation by unambiguous languages, the positive decidability
result of this paper is new, up to the authors' knowledge. It is
equivalent to the decidability of computing the 2-pointlike sets for the
class \pv{DA}.

\section{Preliminaries}
\label{sec:prelim}

We fix a finite alphabet  $A=\{a_1,\dots,a_m\}$. We denote by $A^*$
the free monoid over $A$. The empty word is denoted by~$\varepsilon$. For a word $u \in A^*$, the smallest
$B\subseteq A$ such that $u\in B^*$ is called the \emph{alphabet} of
$u$ and is denoted by $\content u$.

\paragraph{\textbf{Separability}.}
Given languages $L,L_1,L_2$, we say that $L$ \emph{separates} $L_1$
from $L_2$ if 
\begin{equation*}
L_1 \subseteq L \text{ and } L_2 \cap L = \varnothing.
\end{equation*}
Given a class \Sep of languages, we say that the pair $(L_1,L_2)$ is
\emph{\Sep-separable} if some language $L\in\Sep$ separates $L_1$ from
$L_2$. Since all classes we consider are closed under complement,
$(L_1,L_2)$ is \Sep-separable if and only if $(L_2,L_1)$ is, in which
case we simply say that $L_1$ and $L_2$ are \Sep-separable.

We are interested in two classes \Sep  of separators: the class of
piecewise testable languages, and the class of unambiguous
languages.

\paragraph{\textbf{Piecewise Testable Languages}.}
We say that a word $u$ is a \emph{piece} of $v$, if
\vspace*{-1mm}
\begin{equation*}
  u=b_1\cdots b_k,\text{ where } b_1,\ldots ,b_k\in A,\text{\quad and }
  v\in A^* b_1A^*\cdots A^*b_kA^*.
\end{equation*}
For instance, $ab$ is a piece of $bb\BM{a}cc\BM{b}a$. The \emph{size}
of a piece is its number of letters.
A language $L\subseteq A^*$ is \emph{piecewise testable} if there
exists a $\kappa \in \nat$ such that membership of $w$ in $L$ only
depends on the pieces of size up to $\kappa$ occurring in~$w$.
We write $w \kpeq w'$ when $w$ and $w'$ have the same
pieces of size up to $\kappa$. Clearly, \kpeq is an equivalence
relation, and it has finite index (since there are finitely many
pieces of size $\kappa$ or less). Therefore, a language is piecewise
testable if and only if it is a union of \kpeq-classes for
some~$\kappa\in\nat$.  In this case, the language is said to be of
\emph{index}~$\kappa$.  It is easy to see that a language is piecewise
testable if and only if it is a finite boolean combination of
languages of the form $A^*b_1A^*\cdots A^*b_kA^*$.

Piecewise testable languages are those definable by \bsi
formulas. \bsi formulas are boolean combinations of first-order
formulas of the form:
\[
\exists x_{1}\ldots \exists x_{n}\ \varphi(x_{1},\ldots,x_{n}),
\]
where $\varphi$ is quantifier-free. For instance, $A^*b_1A^*\cdots A^*b_kA^*$ is defined by the formula
$\exists x_1\ldots\exists x_k \bigl[\bigwedge_{i<k}
(x_i<x_{i+1})\land\bigwedge_{i\leq k} b_i(x_i)\bigr]$, where the first-order variables $x_1,\ldots, x_k$ are interpreted as positions, and
where $b(x)$ is the predicate testing that position $x$ carries
letter~$b$.

We denote by $\PT[\kappa]$ the class of all piecewise testable
languages of index $\kappa$ or less, and by $\PT=\bigcup_\kappa\PT[\kappa]$
the class of all piecewise testable languages.

Given $L\subseteq A^*$ and $\kappa\in\nat$, the smallest
$\PT[\kappa]$-language containing~$L$ is
$$\ptclos L\kappa=\{w\in A^*\mid \exists u\in L\text{ and }u\kpeq w\}.$$
In general however, there is no smallest \PT-language containing a
given language, because removing one word from a \PT-language yields
again a \PT-language.

\paragraph{\textbf{Unambiguous Languages.}} A product
$L=B_0^*a_1B_1^*\cdots B_{k-1}^*a_kB_k^*$ is called \emph{unambiguous}
if every word of $L$ admits exactly one factorization witnessing its
membership in $L$. The integer $k$ is called the \emph{size} of the
product. An \emph{unambiguous language} is a finite disjoint
union of unambiguous products. Observe that unambiguous languages are
connected to piecewise testable languages. Indeed, it was
proved in~\cite{schul} that the class of unambiguous languages is
closed under boolean operations. Moreover, languages of the
form $A^*b_1A^*\cdots A^*b_kA^*$ are unambiguous, witnessed by the
product $(A\setminus\{b_1\})^*b_1(A\setminus\{b_2\})^*\cdots
(A\setminus\{b_k\})^*b_kA^*$. Therefore, piecewise testable languages
form a subclass of the class of unambiguous~ones.

Many equivalent
characterizations for unambiguous languages have been found~\cite{Tesson02diamondsare}. From a logical point of
view, unambiguous languages are exactly the languages definable by an
\fodw formula~\cite{twfodeux}. \fodw is the two-variable restriction
of first-order logic. Another logical characterization which further
illustrates the link with piecewise testable languages
(i.e. \bsi-definable languages) is \dew. A \siw formula is a
first-order formula of the form:
\[
\exists x_{1}\ldots\exists x_{n}\ \forall y_{1}\ldots\forall y_{m}\ \varphi(x_{1},\ldots,x_{n},y_{1},\ldots,y_{m}),
\]
where $\varphi$ is quantifier-free. A language is
\emph{\dew-definable} if it can be defined both by a \siw formula and
the negation of a \siw formula. It has been proven in~\cite{pwdelta}
that a language is unambiguous if and only if it is \dew-definable.

For two words $w,w'$, we write, $w \kfodeq w'$ if $w,w'$ satisfy the
same unambiguous products of size $\kappa$ or less. We denote by $\UL[\kappa]$ the
class of all languages that are unions of equivalence classes of
$\kfodeq$, and we let $\UL=\bigcup_\kappa\UL[\kappa]$. Observe that because
unambiguous languages are closed under boolean operations, \UL is the
class of all unambiguous languages.

Given $L\subseteq A^*$ and $\kappa\in\nat$, the smallest
$\UL[\kappa]$-language containing~$L$ is
$$\fdclos L\kappa=\{w\in A^*\mid \exists u\in L\text{ and }u\kfodeq w\}.$$
In general again, there is no smallest \UL-language
containing a given language.

\paragraph{\textbf{Automata.}}
\label{paragraph:automata}
A \emph{nondeterministic finite automaton} (NFA) over $A$ is denoted
by a tuple $\mathcal{A}=(Q,A,I,F,\delta)$, where $Q$ is the set of
states, $I\subseteq Q$ the set of initial states, $F\subseteq Q$ the
set of final states and $\delta\subseteq Q\times A\times Q$ the
transition relation. The \emph{size} of an automaton is its number of
states plus its number of transitions.  We denote by $L(\mathcal{A})$
the language of words accepted by $\mathcal{A}$.  Given a word $u\in
A^*$, a subset $B$ of $A$ and two states $p,q$ of $\mathcal{A}$, we
denote
\begin{itemize}
\item by $p\xrightarrow{\ u\ }q$ a path from state $p$ to state $q$ labeled $u$. 
\item by $p\xrightarrow{{}\subseteq B} q$ a path from $p$ to $q$ of
  which all transitions are labeled over~$B$.
\item by $p\xrightarrow{{}=B}q$ a path from $p$ to $q$ of which all
  transitions are labeled over $B$, with the additional demand that
  every letter of $B$ occurs at least once along it.
\end{itemize}
Given a state $p$, we denote by $\scc{p}{\mathcal{A}}$ the strongly
connected component of $p$ in $\mathcal{A}$ (that is, the set of
states that are reachable from $p$ and from which $p$ can be reached),
and by $\contentscc{p}{\mathcal{A}}$ the set of labels of all
transitions occurring in this strongly connected component. Finally,
we define the restriction of $\mathcal{A}$ to a subalphabet
$B\subseteq A$ by
$\mathcal{A}\restriction_B\mathrel{\;\stackrel{\text{def}}=}(Q,B,I,F,\delta\cap
(Q\times B\times Q))$. 

\paragraph{\textbf{Monoids.}}
Let $L$ be a language and $M$ be a monoid. We say that \emph{$L$ is
  recognized by $M$} if there exists a monoid morphism $\alpha : A^*
\rightarrow M$ together with a subset $F \subseteq M$ such that $L =
\alpha^{-1}(F)$. It is well known that a language is accepted by an
NFA if and only if it can be recognized by a
\emph{finite~monoid}. Further, one can compute from any NFA a finite
monoid recognizing its accepted~language.

\section{Separation by piecewise testable languages}
\label{sec:separation}

Since $\PT[\kappa]\subset\PT$, $\PT[\kappa]$-separability implies
$\PT$-separability.  Further, for a fixed~$\kappa$, it is obviously
decidable whether two languages $L_1$ and $L_2$ are
$\PT[\kappa]$-separable: there is a finite number of $\kPT\kappa$
languages over $A$, and for each of them, one can test whether it
separates $L_1$ and $L_2$.  The difficulty for deciding whether $L_1$
and $L_2$ are $\PT$-separable is to effectively compute a witness
$\kappa=\kappa(L_1,L_2)$, \emph{i.e.}, such that $L_1$ and $L_2$ are
\PT-separable if and only if they are $\kPT\kappa$-separable.
Actually, we show that $\PT$-separability is decidable, by 
different arguments:
\begin{enumerate}[labelsep=1.5ex,leftmargin=6ex]
\item[$(1.a)$] We give a necessary and sufficient condition  on
  NFAs recognizing $L_1$ and $L_2$, in terms of
  forbidden patterns, to test whether $L_1$ and $L_2$ are
  \PT-separable.
\item[$(1.b)$] We give a polynomial time algorithm to check this condition.
\item[$(2)$] We compute $\kappa\in\nat$ from $L_1,L_2$, such that
  $\PT$-separability and $\kPT\kappa$-separability 
  are equivalent for $L_1$ and $L_2$.  Hence, if the \textsc{Ptime} algorithm answers that
  $L_1$ and $L_2$ are \PT-separable, then $\ptclos{L_1}{\kappa}$ is a
  valid \hbox{\PT-separator}.
\end{enumerate}

Let us first introduce some terminology to
explain the necessary and sufficient condition on NFAs. Let
$\mathcal{A}$ be an NFA over~$A$. For $u_0,\ldots,u_p\in A^*$ and
\emph{nonempty} subalphabets $B_1,\ldots,B_p\subseteq A$, let
$\vec u=(u_0,\ldots,u_p)$ and $\vec B=(B_1,\ldots,B_p)$. We call
$(\vec u, \vec B)$ a \emph{factorization pair}.  A \emph{$(\vec u,\vec
  B)$-path} in $\mathcal{A}$ is a successful path (leading from the
initial state to a final state of $\mathcal{A}$), of the form shown in
\figurename~\ref{fig:ub-path}.

\tikzstyle{nod}=[minimum size=0.3cm,draw,circle,inner sep=2pt]
\tikzstyle{nof}=[minimum size=0.3cm,draw,circle,double]
\tikzstyle{ars}=[line width=0.7pt,->]

\tikzstyle{arr}=[line width=2.0pt,double,->]
\tikzstyle{wrd}=[line width=0.25cm,gray!50,-]
\tikzstyle{sep}=[line width=1.0pt,-]
\tikzstyle{zig}=[line width=1.0pt,snake=zigzag]
\tikzstyle{ond}=[line width=1.0pt,snake=snake]
\tikzstyle{acc}=[line width=1.0pt,snake=brace]
\tikzstyle{stp}=[midway,draw,thick,rounded rectangle,align=center,fill=white]

\tikzstyle{bag}=[inner sep=0pt]
\tikzstyle{box}=[rectangle]
\vspace*{-3ex}
\begin{figure}[H]
  \centering
    \begin{tikzpicture}
      \node[nod] (q0) at (0.0,0.0) {};
      \node[nod] (q1) at ($(q0)+(1.0,0.0)$) {};
      \node[nod] (q2) at ($(q1)+(1.6,0.0)$) {};
      \node[nod] (q3) at ($(q2)+(1.6,0.0)$) {};
      \node[nod] (q4) at ($(q3)+(1.0,0.0)$) {};
      \node[nod] (q5) at ($(q4)+(0.8,0.0)$) {};
      \node[nod] (q6) at ($(q5)+(1.0,0.0)$) {};
      \node[nod] (q7) at ($(q6)+(1.6,0.0)$) {};
      \node[nod] (q8) at ($(q7)+(1.6,0.0)$) {};
      \node[nof] (q9) at ($(q8)+(1.0,0.0)$) {};

      \draw[ars] ($(q0)-(0.7,0.0)$) to (q0);

      \draw[ars] (q0) to node[above] {$u_0$} (q1); 
      \draw[ars] (q1) to node[above] {${}\subseteq B_1$} (q2); 
      \draw[ars] (q2) to node[above] {${}\subseteq B_1$} (q3); 
      \draw[ars] (q3) to node[above] {$u_1$} (q4); 
      \draw[very thick,dotted] (q4) to (q5); 
      \draw[ars] (q5) to node[above] {$u_{p-1}$} (q6); 
      \draw[ars] (q6) to node[above] {${}\subseteq B_p$} (q7); 
      \draw[ars] (q7) to node[above] {${}\subseteq B_p$} (q8); 
      \draw[ars] (q8) to node[above] {$u_p$} (q9); 

      \draw[ars] (q2) to [out=115,in=65,loop] node[above] {${}= B_1$} (q2);
      \draw[ars] (q7) to [out=115,in=65,loop] node[above] {${}= B_p$} (q7);
    \end{tikzpicture}
  \caption{A $(\vec u, \vec B)$-path}
  \label{fig:ub-path}
\end{figure}
\vspace*{-3ex}
Recall that edges denote sequences of transitions (see 
section \emph{Automata}, p.~\pageref{paragraph:automata}).
Therefore, if $\mathcal{A}$
has a $(\vec u,\vec B)$-path, then $L(\mathcal{A})$
contains a language of the form $u_0(x_1y_1^*z_1)u_1\cdots
u_{p-1}(x_py_p^*z_p)u_p$, where
$\content{x_i}\cup\content{z_i}\subseteq \content{y_i}=B_i$.

Given NFAs $\mathcal{A}_1$ and $\mathcal{A}_2$, we say that a
factorization pair $(\vec u,\vec B)$ is a \emph{witness of non
  \PT-separability} for $(\mathcal{A}_1,\mathcal{A}_2)$ if there is a
$(\vec u,\vec B)$-path in both $\mathcal{A}_1$ and $\mathcal{A}_2$.
For instance, $\mathcal{A}_1$ and $\mathcal{A}_2$ pictured
in~\figurename~\ref{fig:ub-witness} have a witness of non
\PT-separability, namely the factorization pair $(\vec u,\vec B)$ with
$\vec u=(\varepsilon,c,\varepsilon)$ and $\vec B=(\{a,b\},\{a\})$.
\begin{figure}[h]
  \centering
    \begin{tikzpicture}
      \node[nod] (q0) at (0.0,0.0) {};
      \node[nod] (q1) at ($(q0)+(1.5,0.0)$) {};
      \node[nod] (q2) at ($(q1)+(1.5,0.0)$) {};
      \node[nof] (q3) at ($(q2)+(1.5,0.0)$) {};

      \draw[ars] ($(q0)-(0.5,0.0)$) to (q0);
      
      \draw ($(q1)+(0.75,-.7)$) node  {Automaton $\mathcal{A}_1$};
      \draw[ars, bend left] (q0) to node[above] {$a$} (q1); 
      \draw[ars, bend left] (q1) to node[below] {$b$} (q0); 
      \draw[ars] (q1) to node[below] {$c$} (q2); 
      \draw[ars] (q2) to node[below] {$a$} (q3);

      \draw[ars] (q2) to [out=115,in=65,loop] node[above] {$a$} (q2);

      \node[nod] (p0) at ($(q3)+(2,0.0)$) {};
      \node[nod] (p1) at ($(p0)+(1.5,0.0)$) {};      
      \node[nod] (p2) at ($(p1)+(1.5,0.0)$) {};      
      \node[nof] (p3) at ($(p2)+(1.5,0.0)$) {};      

      \draw[ars, bend left] (p0) to node[above] {$b$} (p1); 
      \draw[ars, bend left] (p1) to node[below] {$a$} (p0); 
      \draw[ars] (p1) to node[below] {$b$} (p2); 
      \draw[ars] (p2) to node[below] {$c$} (p3); 
      
      \draw[ars] ($(p0)-(0.5,0.0)$) to (p0);

      \draw ($(p1)+(0.75,-.7)$) node  {Automaton $\mathcal{A}_2$};
      \draw[ars] (p3) to [out=115,in=65,loop] node[above] {$a$} (p3);
    \end{tikzpicture}
  \caption{A witness of non \PT-separability for $(\mathcal{A}_1,\mathcal{A}_2)$: $\vec u=(\varepsilon,c,\varepsilon)$, $\vec B=(\{a,b\},\{a\})$}
  \label{fig:ub-witness}
\end{figure}
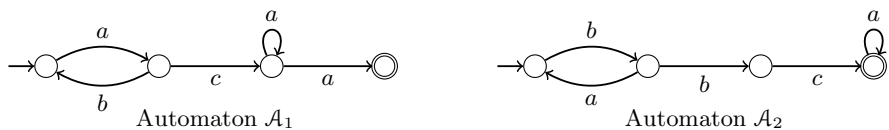
\vspace*{-3ex}
\noindent
We are now ready to state our main result regarding \PT-separability.
\begin{theorem}
  \label{thm:seppiece}
  Let $\mathcal{A}_1$ and $\mathcal{A}_2$ be two NFAs over 
  $A$. Let $L_1=L(\mathcal{A}_1)$ and $L_2=L(\mathcal{A}_2)$. Let $k_1,k_2$ be the number of states
  of $\mathcal{A}_1$ resp.~$\mathcal{A}_2$. Define
  $p=\max(k_1,k_2)+1$ and $\kappa=p|A|2^{2^{|A|}|A|(p|A|+1)}$. Then the following conditions are equivalent:
  \begin{enumerate}[leftmargin=7em]
  \item\label{item:6} $L_1$ and $L_2$ are \PT-separable.
  \item\label{item:8} $L_1$ and $L_2$ are $\PT[\kappa]$-separable.
  \item\label{item:7} The language $\ptclos {L_1}\kappa$ separates $L_1$ from $L_2$.
  \item\label{item:9} There is no witness of non \PT-separability in $(\mathcal{A}_1,\mathcal{A}_2)$.
  \end{enumerate}
\end{theorem}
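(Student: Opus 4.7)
The plan is to establish the equivalence by proving the cycle $(3) \Rightarrow (2) \Rightarrow (1) \Rightarrow (4) \Rightarrow (3)$. The first two implications are essentially free: by construction, $\ptclos{L_1}{\kappa}$ is a union of $\peq{\kappa}$-classes, hence lies in $\PT[\kappa]$, which gives $(3) \Rightarrow (2)$; and $\PT[\kappa] \subseteq \PT$ gives $(2) \Rightarrow (1)$. The real content lies in the two remaining implications.

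For $(1) \Rightarrow (4)$ I would argue the contrapositive. Given a witness $(\vec u, \vec B)$ with $(\vec u,\vec B)$-paths in both $\mathcal{A}_1$ and $\mathcal{A}_2$, each path yields factors $x_i, y_i, z_i$ and $x_i', y_i', z_i'$ respectively, with $\content{y_i}=\content{y_i'}=B_i$ and the internal factors contained in $B_i^*$. Pumping the loops gives, for every $n$, words $w_1^{(n)} = u_0 x_1 y_1^n z_1 u_1 \cdots u_p \in L_1$ and $w_2^{(n)} = u_0 x_1' (y_1')^n z_1' u_1 \cdots u_p \in L_2$. A standard piece-counting lemma says that two words in $B^*$ that contain each letter of $B$ many times are $\peq{k}$-equivalent up to any desired $k$; applying this locally at each SCC segment and concatenating shows that for every $k$ there is $n$ with $w_1^{(n)} \peq{k} w_2^{(n)}$. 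Hence no \PT-language can separate $L_1$ from $L_2$, contradicting $(1)$.

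The core of the proof is $(4) \Rightarrow (3)$, where the specific value $\kappa=p|A|2^{2^{|A|}|A|(p|A|+1)}$ plays its role. Again I would argue the contrapositive: suppose $w_1 \in L_1$ and $w_2 \in L_2$ satisfy $w_1 \peq{\kappa} w_2$, and construct a witness. The idea is to look at accepting runs of $w_1$ in $\mathcal{A}_1$ and $w_2$ in $\mathcal{A}_2$ and factor each run along its \emph{SCC skeleton}: a run alternates between transitions staying inside an SCC and transitions moving strictly ``downwards'' between SCCs, with at most $p-1$ such downward transitions. This yields in each automaton a factorization $w_j = v_{j,0} s_{j,1} v_{j,1} \cdots s_{j,r_j} v_{j,r_j}$, where each $s_{j,i}$ labels a walk entirely inside one SCC with transition alphabet some $C_{j,i}$, and each $v_{j,i}$ is a bounded-length inter-SCC segment. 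The alphabet tower $2^{2^{|A|}|A|(p|A|+1)}$ is tailored to ensure, via an iterated Ramsey/Simon-style argument, that from any $\peq{\kappa}$-equivalent pair of words one can extract a common refined factorization in which the inter-SCC portions $u_0,\ldots,u_p$ agree as actual words and the SCC alphabets coincide pair by pair, giving a common factorization pair $(\vec u, \vec B)$ witnessing failure of separability.

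The hard part is this last alignment: starting only from $\kappa$-piece-equivalence (a purely combinatorial condition on $w_1,w_2$), one must produce a common skeleton from two \emph{a priori} unrelated accepting runs, while controlling both the number of inter-SCC jumps (bounded by $p$) and the alphabets visited along loops (subsets of $A$). The chosen $\kappa$ provides exactly enough granularity: intuitively, each factor of length $\leq \kappa/(p|A|)$ carries enough pieces to pin down its SCC alphabet, while the outer factor $p|A|$ suffices to iterate the extraction across all SCC levels. This is where the ingredients must be assembled most carefully, and where the refinement of the algebraic argument of Almeida and Zeitoun is used to keep everything bounded in terms of $p$ and $|A|$ only, yielding both the explicit $\kappa$ and, consequently, $\ptclos{L_1}{\kappa}$ as an effective separator.
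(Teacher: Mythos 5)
Your cycle $(3)\Rightarrow(2)\Rightarrow(1)\Rightarrow(4)\Rightarrow(3)$ is logically complete, and the first three implications are essentially fine: the easy ones are as in the paper, and your pumping argument for $(1)\Rightarrow(4)$ matches the paper's treatment of that direction. One small repair there: the ``standard piece-counting lemma'' you invoke is false as stated --- $a^{100}b^{100}$ contains each letter of $\{a,b\}$ many times, yet $ba$ is not a piece of it. What you actually need is that iterating the $(=B_i)$-loop $n$ times yields a factor lying in $(B_i^*b_1B_i^*\cdots B_i^*b_rB_i^*)^n$, and such a word contains every word over $B_i$ of length at most $n$ as a piece; since $\sim_k$ is a congruence, the segments can then be concatenated.

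The genuine gap is in $(4)\Rightarrow(3)$, which is the heart of the theorem, and your sketch does not contain the ideas that make it work. You propose to factor the two accepting runs along their SCC skeletons and then ``align'' them via Ramsey so that the inter-SCC portions agree as words and the loop alphabets coincide pairwise. But the runs live in two unrelated automata, their SCC skeletons cut $w_1$ and $w_2$ at completely unrelated positions, and piece-equivalence of the words gives no handle whatsoever on run structure --- so there is nothing to align, and you acknowledge the missing step yourself (``the hard part is this last alignment\dots\ where the ingredients must be assembled most carefully'') without supplying the assembly. The paper's argument runs in the opposite direction and is word-driven, not run-driven. First, a Ramsey argument applied to $w_1$ \emph{alone} (no automaton involved) decomposes it as a $p$-implementation of an \emph{unambiguous} $\ell$-template $T=t_1,\dots,t_\ell$ with $\ell\le N_A=2^{2^{|A|}|A|(p|A|+1)}$: a sequence of single letters and subsets $B\subseteq A$, each set-position realized by a factor that is a $(p,B)$-pattern (Lemma~\ref{lem:ramseysepar}). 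Second --- the technical core, Lemma~\ref{lem:eqimp} --- if $w_1\sim_\kappa w_2$ with $\kappa=|A|\,p\,N_A$, then $w_2$ implements the \emph{same} template $T$; the proof greedily factorizes $w_2$ along $T$ and uses unambiguity of $T$, via pieces ``incompatible'' with $T$, to show the greedy factorization cannot go wrong. Only then do the automata enter: since the pattern powers exceed the number of states, a pigeonhole argument forces the run of $w_1$ in $\mathcal{A}_1$ (resp.\ $w_2$ in $\mathcal{A}_2$) to traverse a loop with label alphabet exactly $B_i$ inside each set-factor, yielding the common $(\vec u,\vec B)$-path. Without a transfer lemma of this kind your argument has no mechanism for producing a \emph{common} factorization pair from two a priori unrelated runs, so the implication is unproven as written.
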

Condition~\ref{item:8} yields an algorithm to test \PT-separability of
regular languages. Indeed, one can effectively compute all piecewise
testable languages of index $\kappa$ (of which there are finitely many), and for each of them, one can test whether it separates $L_1$
and $L_2$.  Before proving Theorem~\ref{thm:seppiece}, we show that
Condition~\ref{item:9} can be tested in polynomial time (and hence,
\PT-separability is \textsc{Ptime} decidable).

\begin{proposition}
  \label{prop:common-ub-path}
  Given two NFAs $\mathcal{A}_1$ and $\mathcal{A}_2$, one can determine
  whether there   exists a witness of non \PT-separability in
  $(\mathcal{A}_1,\mathcal{A}_2)$ in
  polynomial time wrt.~the sizes of $\mathcal{A}_1$ and $\mathcal{A}_2$, and the size
  of the alphabet.
\end{proposition}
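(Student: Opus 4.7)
The plan is to reduce the existence of a witness of non \PT-separability to a reachability problem in an auxiliary graph $\mathcal{G}$ built on top of $\mathcal{A}_1$ and $\mathcal{A}_2$, and then to show that $\mathcal{G}$ can be constructed and queried in polynomial time. The graph $\mathcal{G}$ has vertex set $Q_1\times Q_2$ and edges of two flavours. A \emph{synchronized edge} $(p_1,p_2)\to(p'_1,p'_2)$ exists whenever some common letter $a\in A$ labels $p_i\xrightarrow{a}p'_i$ in both $\mathcal{A}_i$; these encode the $u_i$-segments of a $(\vec u,\vec B)$-path, whose labels must coincide in both automata. A \emph{loop edge} $(p_1,p_2)\to(p'_1,p'_2)$ exists whenever there are a nonempty $B\subseteq A$ and states $q_1,q_2$ with $p_i\xrightarrow{\subseteq B}q_i\xrightarrow{=B}q_i\xrightarrow{\subseteq B}p'_i$ in $\mathcal{A}_i$ for $i=1,2$; these encode the $B_i$-loop segments, whose labels need only share the alphabet $B_i$. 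By construction, a $(\vec u,\vec B)$-path in both automata is exactly a $\mathcal{G}$-path from some pair in $I_1\times I_2$ to some pair in $F_1\times F_2$.

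Once $\mathcal{G}$ is available, reachability is settled in polynomial time in $|\mathcal{G}|$ by standard graph search. The synchronized edges are nothing but the transitions of the usual product automaton $\mathcal{A}_1\times\mathcal{A}_2$, computable in polynomial time in $|\mathcal{A}_1|$, $|\mathcal{A}_2|$, and $|A|$. The genuine difficulty is to compute the loop edges also in polynomial time in $|A|$, hence without enumerating the $2^{|A|}$ candidate subalphabets~$B$.

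To achieve this, I would iterate over pairs $(q_1,q_2)\in Q_1\times Q_2$ of candidate loop-states --- polynomially many --- and, for each pair, extract admissible alphabets via a shrinking fixed-point iteration: starting from $B=A$, replace $B$ by $\contentscc{q_1}{\mathcal{A}_1\restriction_B}\cap\contentscc{q_2}{\mathcal{A}_2\restriction_B}$, and iterate until stabilisation (at most $|A|$ rounds, since $B$ strictly shrinks each time). When the fixed point $B^\star$ is nonempty and matches both SCC alphabets, one recovers a whole family of loop edges by ordinary reachability in $\mathcal{A}_1\restriction_{B^\star}\times\mathcal{A}_2\restriction_{B^\star}$. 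Smaller valid $B$'s, corresponding to nested sub-SCCs that appear after restriction, are handled by recursing the same procedure inside these sub-SCCs; each recursive call strictly shrinks an SCC or its alphabet, so only polynomially many configurations need to be explored.

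The main obstacle I expect is proving the completeness of this SCC-driven search: one must show that every loop-edge witness is captured by some fixed point arising from the shrinking iteration on some pair $(q_1,q_2)$ and its sub-SCC refinements, and carefully bound the total number of (SCC, subalphabet) configurations visited in order to guarantee polynomial running time. Once this is done, $\mathcal{G}$ can be built within the required complexity, and a single reachability query from $I_1\times I_2$ to $F_1\times F_2$ decides the existence of a witness, proving the proposition.
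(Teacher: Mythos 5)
Your construction is essentially the paper's: the graph $\mathcal{G}$ with synchronized edges plus loop edges is exactly the product of the two ``summary'' automata $\tilde{\mathcal{A}_1},\tilde{\mathcal{A}_2}$ built in the paper, and your shrinking fixed-point iteration on SCC alphabets is the same sequence $(C_i)_i$ used there. The one genuine gap is the point you yourself flag as the ``main obstacle'': the completeness of the search over subalphabets $B$. Your proposed fix --- recursing into nested sub-SCCs to catch smaller valid $B$'s --- is both unjustified (the recursion can branch over many sub-SCC pairs, and you give no argument that only polynomially many $(\text{SCC},\text{subalphabet})$ configurations arise) and, more importantly, unnecessary.

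The missing observation is a simple monotonicity argument. For fixed $(q_1,q_2)$, the set of nonempty $B$ such that both $q_i$ admit an $(=B)$-loop is closed under union (concatenate an $(=B)$-loop with an $(=B')$-loop to get an $(=B\cup B')$-loop), so it has a unique maximal element when nonempty, and this maximal element is precisely the stabilized value $B^\star$ of your fixed-point iteration. Moreover, if a loop edge is witnessed by some $B\subseteq B^\star$ together with paths $p_i\xrightarrow{\subseteq B}q_i\xrightarrow{=B}q_i\xrightarrow{\subseteq B}p'_i$, then replacing the $(=B)$-loops by $(=B^\star)$-loops and noting that every $(\subseteq B)$-path is a $(\subseteq B^\star)$-path yields a witness for $B^\star$; the corresponding modification of the factorization pair $(\vec u,\vec B)$ is still a witness of non-\PT-separability. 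Hence checking only the maximal fixed point for each choice of endpoints and loop states is already complete, no recursion is needed, and the polynomial bound follows from the at most $|A|$ rounds of the iteration times the polynomially many tuples of states. With this one-line repair your argument closes and coincides with the paper's proof.
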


\begin{proof}
  Let us first show that the following problem is in \textsc{Ptime}:
  given   states $p_1,q_1,r_1$ of $\mathcal{A}_1$ and $p_2,q_2,r_2$ of
  $\mathcal{A}_2$, determine whether there exists a nonempty
  $B\subseteq A$ and paths
  $p_i\xrightarrow{\subseteq B}q_i\xrightarrow{(= B)}q_i\xrightarrow{\subseteq B}r_i$ in
  $\mathcal{A}_i$ for both $i=1,2$.

  To do so, we compute a decreasing sequence $(C_i)_i$ of alphabets
  overapproximating the greatest alphabet $B$ that can be chosen for
  labeling the loops. Note that if there exists such an alphabet $B$, it should be contained in
  $$C_1\stackrel{\text{def}}{=}\contentscc{q_1}{\mathcal{A}_1} \cap \contentscc{q_2}{\mathcal{A}_2}.$$
  Using Tarjan's algorithm to compute  strongly connected components
  in linear time, 
  one can compute~$C_1$ in linear time as well. Then, we restrict the automata to alphabet $C_1$, and we repeat the process to obtain the sequence~$(C_i)_i$:
  \[
  C_{i+1} \stackrel{\text{def}}{=} \contentscc{q_1}{\mathcal{A}_1\restriction_{C_i}} \cap \contentscc{q_2}{\mathcal{A}_2\restriction_{C_i}}.
  \]
  After a finite number $n$ of iterations, we obtain $C_n =  C_{n+1}$. Note that $n\leq|\content{\mathcal{A}_1} \cap  \content{\mathcal{A}_2}|\leq |A|$. If $C_n = \varnothing$, then there exists  no nonempty $B$ for which there is an ($=B$)-loop around both $p_1$ and  $p_2$. If $C_n \ne \varnothing$, then it is the maximal nonempty  alphabet $B$ such that there are $(=B)$-loops around $q_1$ in  $\mathcal{A}_1$ and $q_2$ in $\mathcal{A}_2$. It then remains to determine  whether there exist paths $p_1\xrightarrow{{}\subseteq   B}q_1\xrightarrow{{}\subseteq B}r_1$ and $p_2\xrightarrow{{}\subseteq   B}q_2\xrightarrow{{}\subseteq B}r_2$, which can be performed in linear  time.

  To sum up, since the number $n$ of iterations to get
  $C_n=C_{n+1}$ is bounded by $|A|$, and since each computation is
  linear wrt.~the size of $\mathcal{A}_1$ and $\mathcal{A}_2$, one can
  decide in \textsc{Ptime} wrt.~to both $|A|$ and these sizes
  whether such pair of paths~occurs.

  Now we build from $\mathcal{A}_1$ and $\mathcal{A}_2$ two new
  automata   $\tilde{\mathcal{A}_1}$ and $\tilde{\mathcal{A}_2}$  as
  follows. The procedure first initializes $\tilde{\mathcal{A}_i}$ as
  a   copy of $\mathcal{A}_i$. Denote by $Q_i$ the state set of
  $\mathcal{A}_i$. For each   4-uple $\tau=(p_1,r_1,p_2,r_2)\in
  Q_1^2\times Q_2^2$ such that there   exist~$B\neq\varnothing$, two
  states $q_1\in Q_1,q_2\in Q_2$ and paths
  $p_i\xrightarrow{{}\subseteq
    B}q_i\xrightarrow{{}=B}q_i\xrightarrow{{}\subseteq B}r_i$ both for
  $i=1$ and $i=2$, we add in both $\tilde{\mathcal{A}_1}$ and
  $\tilde{\mathcal{A}_2}$     a new letter $a_\tau$ to the alphabet,
  and   ``summary'' transitions $p_1\xrightarrow{a_\tau}r_1$ and
  $p_2\xrightarrow{a_\tau}r_2$. Since there is a   polynomial number
  of tuples $(p_1,q_1,r_1,p_2,q_2,r_2)$, the above shows that
  computing these new transitions can be performed in \textsc{Ptime}. So, computing $\tilde{\mathcal{A}_1}$ and   $\tilde{\mathcal{A}_2}$ can be done in \textsc{Ptime}.

 By construction, there exists some factorization pair $(\vec u,
 \vec B)$ such that $\mathcal{A}_1$ and $\mathcal{A}_2$ both have a
 $(\vec u, \vec B)$-path if and only if $L(\tilde{\mathcal{A}_1})\cap
 L(\tilde{\mathcal{A}_2})\not=\varnothing$. Since   both
 $\tilde{\mathcal{A}_1}$ and $\tilde{\mathcal{A}_2}$ can be built in
 \textsc{Ptime}, this can be decided in polynomial time as well.
\end{proof}

The following is an immediate consequence of
Theorem~\ref{thm:seppiece} and Proposition~\ref{prop:common-ub-path}.
\begin{corollary}
  \label{thm:PT-separation}
  Given two NFAs, one can determine in polynomial time, with respect to the number of states and the size of the alphabet, whether the languages recognized by these NFAs are \PT-separable.\qed
\end{corollary}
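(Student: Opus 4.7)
The plan is short, since this is essentially a packaging of the two preceding results. First I would invoke Theorem~\ref{thm:seppiece}, specifically the equivalence between condition \ref{item:6} (\PT-separability of $L_1$ and $L_2$) and condition \ref{item:9} (absence of a witness of non \PT-separability in $(\mathcal{A}_1,\mathcal{A}_2)$). This converts the semantic question about existence of a separator in the infinite class \PT into a purely combinatorial property of the two input NFAs, namely the non-existence of a common $(\vec u,\vec B)$-path shape.

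Next I would apply Proposition~\ref{prop:common-ub-path}, which says exactly that the existence of such a witness can be decided in time polynomial in the sizes of $\mathcal{A}_1$, $\mathcal{A}_2$ and $|A|$. Combining the two, the algorithm is: build the NFAs from the input (if not already given in that form), run the \textsc{Ptime} procedure of Proposition~\ref{prop:common-ub-path}, and answer ``\PT-separable'' if and only if no witness is found. Composition of \textsc{Ptime} steps remains in \textsc{Ptime}, with the relevant parameters being the number of states and $|A|$, matching the claimed complexity.

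Since the result is stated as ``an immediate consequence,'' there is no real obstacle here; the only thing worth making explicit is that Proposition~\ref{prop:common-ub-path} bounds the running time in both the NFA sizes \emph{and} $|A|$, which is what guarantees that the overall procedure is polynomial in the parameters mentioned in the statement. I would therefore write the proof as a two-line argument citing Theorem~\ref{thm:seppiece}.\ref{item:6}$\Leftrightarrow$\ref{item:9} followed by Proposition~\ref{prop:common-ub-path}, and close with~\qed.
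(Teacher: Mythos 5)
Your proposal is correct and matches the paper exactly: the corollary is stated there as an immediate consequence of the equivalence \ref{item:6}$\Leftrightarrow$\ref{item:9} in Theorem~\ref{thm:seppiece} combined with the \textsc{Ptime} test of Proposition~\ref{prop:common-ub-path}. Your added remark that the proposition's bound is polynomial in both the automata sizes and $|A|$ is the right point to make explicit.
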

In the rest of the section, we sketch the proof  of
Theorem~\ref{thm:seppiece}. The implications
\ref{item:7}$\Longleftrightarrow$\ref{item:8}$\implies$\ref{item:6}
are obvious. To show \ref{item:6}$\implies$\ref{item:8}, we introduce some
terminology.
Let us fix an arbitrary order $a_1<\cdots<a_m$ on~$A$.

\smallskip\noindent
\textbf{\patts pB.}  Let
$B=\{b_1,\ldots,b_r\} \subseteq A$ with $b_1<\cdots<b_r$, and let $p
\in \nat$. We say that a word $w \in A^*$ is a \emph{\patt{p}{B}} if
$w\in (B^*b_1B^*\cdots B^*b_rB^*)^p$. 
The number $p$ is called the
\emph{power} of~$w$.
For example, set $B=\{a,b,c\}$ with $a<b<c$. The word
$bb\BM{a}a\underline{\boldsymbol
  b}ab\BM{c}c\underline{\boldsymbol
  a}c\BM{b}aba\BM{c}a$ is a
\patt{2}{B} but not a \patt{3}{B}.

\smallskip\noindent {\bf \ltemps.} An \emph{\ltemp} is a sequence of length
$\ell$, $T=t_1,\dots, t_\ell$, such that every $t_i$ is either a letter
$a$ or a subset $B$ of the alphabet $A$. The main idea behind \ltemps
is that they yield decompositions of words that can be detected using
pieces and provide a suitable decomposition for
pumping. Unfortunately, not all \ltemps are actually detectable.
Because of this we restrict ourselves to a special case of \ltemps. An
\ltemp is said to be \emph{unambiguous} if all pairs $t_i,t_{i+1}$
are either two letters, two incomparable sets or a set and a letter
that is not included in the set. For example, $T=a,\{b,c\},d,\{a\}$ is
unambiguous, while $T'={\BM b},\{{\BM b},c\},d,\{a\}$ and $T'' =
a,\{b,{\BM c}\},{\{\BM c\}},\{a\}$ are not.

\smallskip\noindent
{\bf \kimpls.}  
A word $w\in A^*$ is a \emph{\kimpl} of an \ltemp $T=t_1,\dots, t_\ell$ if
$w=w_1 \cdots w_\ell$ and for all $i$ either $t_i=w_i\in A$ or $t_i=B \subseteq A$, $w_i \in B^{*}$ and $w_i$
is a \patt{p}{B}.
For example, $abccbbcbdaaaa=a.(\BM b\BM cc\BM bb\BM cb).d.(\BM a\BM aaa)$ is  a \impl{2} of the
\temp{4} $T=a,\{b,c\},d,\{a\}$, since $\BM b\BM cc\BM bb\BM cb$ is
a \patt{2}{\{b,c\}} and $\BM a\BM aaa$ is a \patt{2}{\{a\}}.

We now use \kimpls to prove \ref{item:6}$\implies$\ref{item:8}. The
proof is divided in two steps. First, we prove that there exists $p$
such that if two words are \kimpls of the same \ltemp for some $\ell$,
then they can be pumped into words containing the same pieces of size
$k$ for any $k$, while keeping membership in the regular languages. We will then prove that if two words contain the
same pieces for a large enough size, they are both \kimpls of a
common unambiguous \ltemp. We begin with the first step in the
following lemma.

\begin{lemma} \label{lem:piecepump}
Let $w_1 \in L_1$ and $w_2 \in L_2$. From $L_1, L_2$, we can compute $p \in \mathbb{N}$
such that whenever $w_1, w_2$ 
are both \kimpls of an \temp{\ell}
$T$ for some $\ell$, then for every $\kappa \in \mathbb{N}$, there exist $w'_1 \in
L_1$ and $w'_2 \in L_2$ such that $w'_1 \peq{\kappa} w'_2$.
\end{lemma}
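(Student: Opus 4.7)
The plan is to exploit the fact that $L_1$ and $L_2$ are recognised by finite monoids, and then to pump each subalphabet-block of $w_1$ and $w_2$ independently inside these monoids. First I would fix morphisms $\eta_j\colon A^*\to M_j$ recognising $L_j$ for $j=1,2$ (both computable from $L_1,L_2$), and set $p=\max(|M_1|,|M_2|)$, so that $p$ is computable from $L_1,L_2$ alone as required. Writing $w_j=w_{j,1}\cdots w_{j,\ell}$ along the shared \temp{\ell} $T=t_1,\ldots,t_\ell$, I would concentrate on each block $w_{j,i}$ whose slot $t_i$ is a subalphabet $B$: by hypothesis $w_{j,i}\in B^*$ is a \pata{p}, so it factors as $w_{j,i}=u^{(j)}_1\cdots u^{(j)}_p$ with every $u^{(j)}_k$ a \pata{1} over $B$.

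For each such pair $(j,i)$, a pigeonhole argument on the $p+1$ prefix images $\eta_j(u^{(j)}_1\cdots u^{(j)}_k)$ produces indices $k_1<k_2$ with equal image; setting $\alpha=u^{(j)}_1\cdots u^{(j)}_{k_1}$ and $\beta_{j,i}=u^{(j)}_{k_1+1}\cdots u^{(j)}_{k_2}$, a short induction gives $\eta_j(\alpha)\eta_j(\beta_{j,i})^N=\eta_j(\alpha)$ for every $N\geq 0$. Replacing $\beta_{j,i}$ by $\beta_{j,i}^N$ in place inside $w_j$ therefore preserves the image in $M_j$, and hence membership in $L_j$. Crucially, because $\beta_{j,i}$ is a nonempty concatenation of consecutive \patas{1} over $B$, it is itself a \pata{1} over $B$, and $\beta_{j,i}^N$ is at least an \pata{N} over $B$.

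Taking $N=\kappa$ and calling $w'_1,w'_2$ the pumped words, every subalphabet-block of each $w'_j$ is now a \pata{\kappa}, while the letter-blocks of the template agree letter-for-letter in $w'_1$ and $w'_2$. To verify $w'_1\peq{\kappa}w'_2$, I would take any piece $s$ of $w'_1$ of length $\leq\kappa$, split its letters by the template block hosting each of them, and reassemble them inside $w'_2$: each letter-block contributes the same single letter in both words, and the contribution to any subalphabet-block is a word of length $\leq\kappa$ over $B$, which embeds as a piece into any \pata{\kappa} over $B$ by placing the $m$-th letter inside the $m$-th copy of $B^*b_1B^*\cdots B^*b_rB^*$. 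Symmetry then yields $w'_1\peq{\kappa}w'_2$.

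The main difficulty I expect is reconciling two a priori unrelated requirements on the pumping factor $\beta_{j,i}$: a monoid-theoretic one (pumping must preserve membership in $L_j$) and a combinatorial one (the pumped factor must carry arbitrarily strong \pata{\kappa} structure). The \kimpl hypothesis is exactly what makes both possible: any factor isolated by pigeonhole among the $u^{(j)}_k$ is automatically a nonempty concatenation of \patas{1}, which simultaneously leaves the monoid image invariant under pumping and upgrades the pumped block to a \pata{N} of any desired size.
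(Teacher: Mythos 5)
Your proof is correct and follows essentially the same route as the paper's: pump each set-block of the template into a \patt{\kappa}{B} via pigeonhole while preserving recognition, then check that the resulting words share all pieces of size $\kappa$. The only divergence is that you pigeonhole on prefix images in recognizing monoids and take $p=\max(|M_1|,|M_2|)$, whereas the paper pigeonholes on states of NFAs and takes $p=\max(k_1,k_2)$; this is immaterial for the lemma as stated, but the paper's automaton-based choice of $p$ is the one fed into the explicit bound $\kappa$ of Theorem~\ref{thm:seppiece}, and replacing it by monoid sizes would blow that bound up exponentially.
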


\begin{proof}
  This is a pumping argument. Let $k_1,k_2$ be the number of states
  of automata recognizing $L_1$ and $L_2$ and set $p =\max(k_1,k_2)$.
  Set $w_1,w_2$ and $T=t_1,\dots,t_\ell$ as in the statement of the lemma.
  Fix $\kappa\in\nat$. 
  Whenever $t_i$ is a set $B$, the corresponding factors in $w_1, w_2$ are \patts{p}{B}.
  By choice of $p$, it follows from a pumping argument that these
  factors can be pumped into
  \patts{\kappa}{B} in $L_1$ and $L_2$. It is then easy to check that the
  resulting words have the same pieces of size~$\kappa$. 
\end{proof}

We now move to our second step. We prove that there exists a number $\kappa$ such that two words having the same pieces of size up to $\kappa$ must both be \kimpls of a common unambiguous \ltemp (where $p$ is the number introduced in Lemma~\ref{lem:piecepump}). 
Again, we split the proof in two parts. 
We begin by proving that it is enough
to look for \ltemps for a bounded $\ell$.

\begin{lemma} \label{lem:ramseysepar}
  Let $p \in \mathbb{N}$.
  Every word is the \kimpl of some
  unambiguous \temp{N_A}, for $N_A=2^{2^{|A|}|A|(p|A|+1)}$.
\end{lemma}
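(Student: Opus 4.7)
The bound $N_A = 2^{2^{|A|}|A|(p|A|+1)}$ is exactly the classical Ramsey number for edge-colorings with $c = 2^{|A|}$ colors and a desired monochromatic clique of size $m = p|A|+1$, since the standard estimate is $R(c,m) \leq c^{cm} = 2^{|A|\cdot 2^{|A|}\cdot (p|A|+1)}$. This strongly suggests a Ramsey-based argument, which is what I would pursue, by strong induction on $|w|$.

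For a word $w$ of length less than $N_A$, the trivial letter-by-letter template $T = w_1,\dots,w_{|w|}$ does the job: every adjacent pair of entries is a pair of letters, which is always allowed, so $T$ is unambiguous and $w$ is its own \kimpl of $T$. Padding with $\varnothing$-entries inserted between two consecutive letter entries then brings the length up to exactly $N_A$, the insertion being permitted because $\varnothing$ paired with any letter $a$ satisfies $a \notin \varnothing$, and a $\varnothing$-entry consumes an empty factor.

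For a word $w$ of length at least $N_A$, I would apply Ramsey's theorem to the coloring $c(i,j) = \content{w[i+1:j]} \in 2^A$ of pairs of positions $0 \leq i < j \leq |w|$. It yields positions $i_1 < \cdots < i_m$ with $m = p|A|+1$ such that all pairs $(i_s, i_t)$ share a single color $C \subseteq A$, necessarily nonempty. The substring $v = w[i_1+1:i_m]$ then splits into the $m-1 = p|A| \geq p|C|$ consecutive factors $w[i_s+1:i_{s+1}]$, each of alphabet exactly $C$ and thus containing every letter of $C$. Grouping these factors in batches of $|C|$ consecutive ones and selecting in the $j$-th factor of a batch the $j$-th letter of $C$ (in sorted order) yields $p$ consecutive scans of $C$, so that $v$ is a \patt{p}{C}. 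Writing $w = w' \cdot v \cdot w''$ and recursing on $w'$ and $w''$, one obtains unambiguous templates $T'$ and $T''$ which I would concatenate with the single entry $C$ in between.

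\paragraph{Main obstacle.}
The critical difficulty is ensuring unambiguity at the two new adjacencies, namely the last entry of $T'$ with $C$ and $C$ with the first entry of $T''$. Since the inductive step does not directly control the extremal entries of $T'$ and $T''$, I expect to have to shrink $v$ slightly on each side, so that the two letters just outside the shrunken block can serve as explicit single-letter separator entries lying outside $C$; if such letters do not already fall in $A \setminus C$, the shrinkage must continue inward, which a slightly more generous Ramsey step can always guarantee. Keeping the cumulative template length within $N_A$ across the whole recursion, once all these boundary adjustments have been accounted for, is the subtle quantitative step of the proof, and is precisely what the chosen Ramsey parameters are tuned to absorb.
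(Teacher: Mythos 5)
Your central Ramsey step is exactly the paper's: color pairs of cut points by the alphabet of the enclosed infix, extract a monochromatic clique of size $p|A|+1$ of some color $B$, and observe that the $p|A|$ resulting consecutive factors, each of alphabet exactly $B$, concatenate into a \patt{p}{B}. The gap is in the surrounding induction. You recurse on the two subwords $w'$ and $w''$ flanking the extracted block and concatenate the resulting templates, but nothing in this recursion bounds the length of the concatenation by $N_A$. Even in the most favorable case where both $w'$ and $w''$ immediately fall into the base case, the concatenated template has length $|w'|+1+|w''| = |w|-|v|+1$, which can be close to $2N_A$; in general the final length is roughly $|w|$ minus the total length of the extracted blocks plus their number, and no mechanism forces this down to $N_A$. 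The paper avoids this by never recursing on subwords: it maintains a single \temp{\ell} for the whole word (starting from the letter-by-letter template), and whenever $\ell > N_A$ it applies Ramsey to the vertices $\{0,\dots,\ell\}$ given by the factor boundaries of the \emph{current} template, replacing the $p|A|$ selected consecutive entries by one set entry. Each such step strictly decreases $\ell$, so the loop terminates with $\ell \leq N_A$. Your argument can be repaired along these lines, but the Ramsey coloring must then be re-aimed at the evolving template rather than at fresh subwords.

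The second issue you flag, unambiguity at the new junctions, does not require the boundary-shrinking you sketch (which you in any case do not carry out). The paper dispatches it by a one-time formal reduction: prove the statement for possibly ambiguous templates, then merge every ambiguous adjacent pair (two comparable sets, or a set together with a letter it contains) into a single entry; this only shortens the template and it is routine to turn a \kimpl of the old template into one of the merged template. Adopting that reduction lets you drop all boundary adjustments. (A minor further point: the lemma is effectively used with templates of length \emph{at most} $N_A$, so the padding with empty-set entries is unnecessary; note also that it could not pad a very short word up to length $N_A$, since two adjacent empty sets are comparable and hence ambiguous.)
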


\begin{proof}
  We first get rid of the unambiguity condition. Any ambiguous \ltemp
  $T$ can be reduced to an unambiguous \temp{\ell'} $T'$ with
  $\ell'<\ell$ by merging the ambiguities. It is then straightforward
  to reduce any \kimpl of $T$ into a \kimpl of $T'$. Therefore, it
  suffices to prove that every word is the \kimpl of some (possibly
  ambiguous) \temp{N_A}.

  \smallskip

  The choice of $N_A$ comes from Erdös-Szekeres' upper bound of Ramsey
  numbers. Indeed, for this value of $N_A$, from every complete graph of
  size $N_A$ with edges labeled over $2^{|A|}$ colors, there must be
  some complete monochromatic subgraph of size $p|A|+1$ (see
  \cite{bacher:hal-00763927} for a short proof that this bound
  suffices).

  \smallskip

  Observe that a word is always the \kimpl of the \temp{\ell} which is
  just the sequence of its letters. Therefore, in order to complete our
  proof, it suffices to prove that if a word is the \kimpl of some
  \temp{\ell} $T$ with $\ell > N_A$, then it is also the \kimpl of an
  \temp{\ell'} with $\ell' < \ell$.
  
  \smallskip
  Fix a word $w$, and assume that $w$ is the \kimpl of some \temp{\ell}
  $T=t_1,\dots,t_{\ell}$ with $\ell > N_A$. By definition, we get a
  decomposition $w=w_1\cdots w_{\ell}$. We construct a complete graph $\Gamma$
  with vertices $\{0,\dots, \ell\}$ and edges labeled by subsets of
  $A$. For all $i<j$, we set $\content{w_{i+1}\cdots w_{j}}$ as the
  label of the edge $(i,j)$. 
  Since $\Gamma$ has more than $\ell>N_A$
  vertices, by definition $N_A$ there exists a complete monochromatic subgraph
  with $p|A|+1$ vertices $\{i_{1},\dots,i_{p|A|+1}\}$. Let $B$ be the
  color of the edges of this monochromatic subgraph.
  Let $w'= w_{i_1+1}\cdots w_{i_{p|A|+1}}$, which is the concatenation of the
  $p|A|$ words, $w_{i_j+1}\cdots w_{i_{j+1}}$, for $j < p|A|$. By
  construction, these words have alphabet exactly $B$, hence $w'$
  is a \patt{p}{B}. It follows that $w$ is a \kimpl of the
  \temp{\ell'} $t_1,\dots,t_{i_1},B,t_{i_{p|A|+2}},\dots,t_{\ell}$
  with $\ell'= \ell - p|A| +1$. Hence $\ell'< \ell$ (except for the trivial case $p=|A|=1$).
\end{proof}

\noindent
The next lemma shown in App.~A proves that once  $\ell$ and
$p$ are fixed, given $w$ it is possible to
describe by pieces the unambiguous \ltemps that $w$ \kimpv.

\begin{lemma} \label{lem:eqimp}
  Let $\ell,p \in \mathbb{N}$. From $p$
  and $\ell$, we can compute $\kappa$ such that for every pair of words $w \peq{\kappa} w'$ and every
  unambiguous \temp{\ell} $T$, $w'$ is
  a \kimpl of $T$ whenever $w$ is a \impl{(p+1)} of $T$.
\end{lemma}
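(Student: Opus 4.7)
The plan is to choose $\kappa$ as an explicit function of $p$, $\ell$, and $|A|$ (a natural first candidate being $\kappa = \ell(p+1)|A|+1$, possibly with small adjustments) and to construct a $p$-implementation of $T$ in $w'$ directly from the structure of $T$ and an appropriate embedding in $w'$. The starting observation is that a $(p+1)$-implementation of the unambiguous template $T = t_1, \ldots, t_\ell$ in $w$ is witnessed by the canonical piece $u_T = u_1 \cdots u_\ell$, where $u_i = a$ for each letter entry $t_i = a$ and $u_i = (b_1 \cdots b_r)^{p+1}$ for each set entry $t_i = B = \{b_1 < \cdots < b_r\}$. Since $|u_T| \leq \ell(p+1)|A| \leq \kappa$ and $w \peq{\kappa} w'$, the piece $u_T$ embeds in $w'$; I would fix the leftmost such embedding $\iota$.

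From $\iota$ I would build a factorization $w' = w'_1 \cdots w'_\ell$ by declaring, for each letter entry $t_i = a$, the single position that $\iota$ picks for $u_i$ to be $w'_i$, and for each set entry $t_i = B$, taking $w'_i$ to be the contiguous segment of $w'$ strictly between the positions chosen for the flanking entries. When two set entries $t_i, t_{i+1}$ are adjacent, the boundary between $w'_i$ and $w'_{i+1}$ is not directly determined by $\iota$, but the unambiguity condition supplies distinguishing letters $b \in B_i \setminus B_{i+1}$ and $b' \in B_{i+1} \setminus B_i$; I would place the boundary between the last $b$ and the first $b'$ occurring in the relevant segment. The $p$-pattern property for each set-entry factor then follows from the fact that $w'_i$ contains the image of $(b_1 \cdots b_r)^{p+1}$ under $\iota$, which supplies at least $p$ full iterations even if one is absorbed across a boundary.

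The main obstacle I expect is verifying the alphabet condition: for each set entry $t_i = B$, I need $\content{w'_i} \subseteq B$. This is where the unambiguity of $T$ is essential. For each forbidden letter $c \notin B$, unambiguity ensures that in $w$ itself the $c$'s occur only in rather restricted positions outside of $w_i$, and this can be distilled into a short piece of $w$ that forbids a stray $c$ inside the analog of $w_i$ in $w'$. Concretely, I would argue contrapositively: a stray $c$ inside $w'_i$ combined with the already-embedded piece $u_T$ would yield a piece of $w'$, of length at most $|u_T|+1 \leq \kappa$, that is not a piece of $w$, contradicting $w \peq{\kappa} w'$. The delicate subcase is when $t_i$ is flanked on both sides by set entries, since then the forbidden letters and their ordering constraints must be enumerated over the different unambiguity configurations; a careful case analysis by unambiguity type ($t_{i-1}, t_{i+1}$ being letter/set) suffices and dictates the precise additive correction to $\kappa$.
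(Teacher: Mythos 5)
Your overall strategy coincides with the paper's: both proofs hinge on the canonical piece $v_{T,p+1}=v_1\cdots v_\ell$ attached to $T$ (your $u_T$) and on the fact that, because $T$ is unambiguous, inserting into it a word whose first letter is outside $t_i$ and whose last letter is outside $t_{i+1}$ produces a piece that cannot occur in $w$; the hypothesis $w\peq{\kappa}w'$ then transfers these constraints to $w'$. The difference is how the factorization $w'=w'_1\cdots w'_\ell$ is obtained: you read it off a leftmost embedding of $u_T$ and must then prove $\content{w'_i}\subseteq B$, whereas the paper builds it greedily from left to right ($w'_i$ is the \emph{maximal} prefix over $B$ of what remains), so that the alphabet condition holds by construction and the first letter excluded from $w'_i$ is automatically outside $t_i$ --- exactly the hypothesis needed to form an incompatible piece. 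This choice is not cosmetic; it is what closes the two gaps below.

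First, your contrapositive argument for the alphabet condition does not always produce a non-piece of $w$. Take adjacent incomparable set entries $t_{i-1}=B_{i-1}$ and $t_i=B$, and a letter $c\in B_{i-1}\setminus B$ sitting at the very start of your $w'_i$, before the first embedded iteration of $(b_1\cdots b_r)$. The piece $v_1\cdots v_{i-1}\cdot c\cdot v_i\cdots v_\ell$ is \emph{not} incompatible (its inserted letter lies in $t_{i-1}$) and can perfectly well embed in $w$, with $c$ landing in the tail of $w_{i-1}$: a \patt{(p+1)}{B_{i-1}} may contain occurrences of $c$ beyond those used by $v_{i-1}$. So no contradiction arises --- and none should, since that $c$ ought to be assigned to $w'_{i-1}$. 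Your boundary rule does not guarantee this: you fix one pair of distinguishing letters $b\in B_{i-1}\setminus B$, $b'\in B\setminus B_{i-1}$, but another letter of $B_{i-1}\setminus B$ may occur after the last $b$ and then lands in $w'_i$. The boundary must absorb \emph{all} letters of $B_{i-1}$ greedily, and the well-definedness of that rule (every letter of $B_{i-1}\setminus B$ in the gap precedes every letter of $B\setminus B_{i-1}$) is itself an instance of the incompatibility argument, which you would need to state and prove explicitly, as the paper does with its Claim. Second, the count ``at least $p$ full iterations even if one is absorbed across a boundary'' can fail: at a set--set boundary the cut may fall strictly inside an iteration of the embedded block, and since the left and right boundaries of $w'_i$ are governed by two independent adjacent pairs, both can truncate the image of $(b_1\cdots b_r)^{p+1}$; losing a partial iteration at each end leaves only a \patt{(p-1)}{B} (e.g.\ $(ab)^{p+1}$ truncated to $b(ab)^{p-1}a$ contains $(ab)^{p-1}$ but not $(ab)^p$). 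Both issues disappear with the greedy factorization and a left-to-right induction as in the paper's proof, where the \patt{p}{B} property is established by contradiction via an incompatible piece rather than by counting surviving iterations.
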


We finish the proof of the implication \ref{item:6}$\implies$\ref{item:8} by assembling
the results. Assume that $L_i$ is recognized by an NFA with $k_i$
states. Let $p=\max(k_1,k_2)$ be as introduced in Lemma~\ref{lem:piecepump},
$N_A$ as introduced in Lemma~\ref{lem:ramseysepar} for $p+1$, and
$\kappa=|A|(p+1)N_A$ be as introduced in Lemma~\ref{lem:eqimp}. Fix $\kappa' >
\kappa$ and assume that we have $w_1 \in L_1$ and $w_2 \in L_2$ such
that $w_1 \kpeq w_2$. By Lemma~\ref{lem:ramseysepar}, $w_1$ is the
\impl{(p+1)} of some unambiguous \temp{N_A} $T$. Moreover, it follows from
Lemma~\ref{lem:eqimp} that $w_2$ is a \kimpl of $T$. By
Lemma~\ref{lem:piecepump}, we finally get that there exists $w'_1 \in
L_1$ and $w'_2 \in L_2$ such that $w'_1 \peq{\kappa'} w'_2$.

The implication \ref{item:6}$\implies$\ref{item:9} of
Theorem~\ref{thm:seppiece} is easy and shown by
contraposition, see~\cite[Lemma~2]{vRZ} and
The remaining implication \ref{item:9}$\implies$\ref{item:6} can be shown using
Lemma~\ref{lem:eqimp} (see Appendix). For a direct proof, see~\cite[Lemma~3]{vRZ}, where the
key for getting a forbidden pattern out of two non-separable
languages is to extract a suitable \kimpl using Simon's factorization forests~\cite{Simon199065}.

\section{Separation by unambiguous languages}
\label{sec:FO2}

This section is devoted to proving that \UL-separability is a
decidable property. We use an argument that is analogous to
property \ref{item:8} of Theorem~\ref{thm:seppiece} in Section~\ref{sec:separation}. We prove that if
$L_1,L_2$ are languages, it is possible to compute a number $\kappa$
such that $L_1,L_2$ are \UL-separable iff they are
$\UL[\kappa]$-separable. It is then possible to test separability by
using a brute-force approach that tests all languages in $\UL[\kappa]$.

In this case, we were not able to prove equivalence between
\UL-separability and the existence of some common witness inside
the automata of both input languages. Because of this, we have a much
higher complexity. A rough analysis of the problem, which can be found
in the Appendix, gives an algorithm that runs in nondeterministic
$3$-exponential time in $\kappa$. It is likely that this can be~improved.

\noindent
We now state the main theorem of this section.
\begin{theorem}
  \label{thm:sepfod}
  Let $M_1$ and $M_2$ be monoids recognizing $L_1,L_2\subseteq A^*$. Let
  $\kappa=(2|M_1||M_2|+1)(|A|+1)^2$. Then the following
  conditions are equivalent:

  \begin{enumerate}[leftmargin=10em]
  \item\label{item:a} $L_1$ and $L_2$ are \UL-separable.
  \item\label{item:b} $L_1$ and $L_2$ are $\UL[\kappa]$-separable.
  \item\label{item:c} The language $\fdclos{L_1}{\kappa}$ separates
    $L_1$ from $L_2$.
  \end{enumerate}
\end{theorem}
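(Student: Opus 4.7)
The implications $\ref{item:c}\Longrightarrow\ref{item:b}\Longrightarrow\ref{item:a}$ are immediate, since $\fdclos{L_1}{\kappa}\in\UL[\kappa]\subseteq\UL$. The plan is to establish the hard direction $\ref{item:a}\Longrightarrow\ref{item:c}$ by contraposition: assuming $\fdclos{L_1}{\kappa}\cap L_2\neq\varnothing$, so that there are $w_1\in L_1$ and $w_2\in L_2$ with $w_1\kfodeq w_2$ for the given $\kappa$, I would build for every $\kappa'\geq\kappa$ words $w'_1\in L_1$ and $w'_2\in L_2$ with $w'_1\fodeq{\kappa'}w'_2$. Since every $L\in\UL$ is a union of $\fodeq{\kappa'}$-classes for some~$\kappa'$, no unambiguous language could then separate $L_1$ from~$L_2$.

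I would reuse the three-step architecture of Section~\ref{sec:separation}, replacing pieces by unambiguous products and NFAs by the monoids $M_1,M_2$. Concretely, I would adapt the notion of \ltemp so that every ``set'' entry $B$ is flanked by designated boundary letters making the underlying product $B_0^{*}a_1 B_1^{*}\cdots a_k B_k^{*}$ unambiguous, and I would say that a word \kimpv such a template when it admits a compatible factorization $w=w_0 a_1 w_1\cdots a_k w_k$ in which each $w_i$ is a $p$-fold traversal of the alphabet $B_i$. The pumping parameter is set to $p=2|M_1||M_2|$ so that, by the pigeon-hole principle in the product monoid $M_1\times M_2$, every $p$-fold traversal contains a sub-iteration whose image is simultaneously idempotent in $M_1$ and in $M_2$, and can therefore be repeated arbitrarily often while staying inside $L_1$ and $L_2$ respectively.

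Three lemmas, analogous to Lemmas~\ref{lem:piecepump}, \ref{lem:ramseysepar} and~\ref{lem:eqimp}, then suffice.
\emph{(i) Pumping.} If $w_1\in L_1$ and $w_2\in L_2$ are both \kimpls of a common unambiguous template, then for every $\kappa'$, simultaneous pumping in $M_1\times M_2$ inside each set block yields $w'_1\in L_1$ and $w'_2\in L_2$ with $w'_1\fodeq{\kappa'}w'_2$.
\emph{(ii) Ramsey.} Every word $(p+1)$-implements some unambiguous template of length at most $(|A|+1)^2$; the bound is polynomial (rather than a tower of exponentials as in Section~\ref{sec:separation}) because unambiguous templates are forced to alternate letters and pairwise incomparable sets in a constrained way, which simultaneously limits the number of set blocks and the number of letters between them.
\emph{(iii) Equivalence.} If $w\fodeq{\kappa} w'$ with $\kappa=(2|M_1||M_2|+1)(|A|+1)^{2}$, then any unambiguous template of length $\leq(|A|+1)^2$ that $w$ $(p+1)$-implements is $p$-implemented by $w'$ as well.
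Applying (ii) to $w_1$, transferring the template to $w_2$ via (iii), and finally invoking (i) produces the required $w'_1$ and $w'_2$.

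The main obstacle will be step~(iii). For pieces it suffices that two words share all scattered subwords up to a given length; for unambiguous products one must additionally certify the \emph{positional} constraints that guarantee unique factorization, namely that the designated boundary letters of the template truly separate blocks of prescribed alphabet. Extracting this information from unambiguous products of size~$\kappa$ requires a careful induction on the template shape, and is precisely what dictates both factors in the bound: the $(|A|+1)^2$ from the template length and the $2|M_1||M_2|+1$ from the power. This step generalises Lemma~\ref{lem:eqimp} and is technically the heaviest part of the argument; the rest reduces to monoid pumping and to the Ramsey bound on template length.
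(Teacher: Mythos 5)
Your reduction of \ref{item:a}$\implies$\ref{item:c} to a transfer statement (for every $\kappa'>\kappa$, from $w_1\kfodeq w_2$ with $w_i\in L_i$ produce $w'_1\fodeq{\kappa'}w'_2$ with $w'_i\in L_i$) is exactly the paper's Proposition~\ref{prop:fodtrans}, and you handle \ref{item:c}$\implies$\ref{item:b}$\implies$\ref{item:a} the same way. However, your plan for proving the transfer statement diverges from the paper's and has a gap I do not see how to close. The paper does \emph{not} transplant the template/Ramsey machinery of Section~\ref{sec:separation}; it argues by induction on $|A|$, using the canonical \kdecop of a word relative to the fixed order $a_1<\dots<a_m$ (read the letters of $A$ cyclically; each factor then avoids one letter, hence lives over a \emph{strict} subalphabet, which is what feeds the induction). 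Lemma~\ref{lem:foddecompos} shows that this decomposition, together with the $\fodeq{\tilde\kappa}$-classes of its factors, is detectable by unambiguous products of size $\kappa$, and Lemma~\ref{lem:pattfo2} shows that between two \patas{\kappa'} the middle factor is invisible to products of size $\kappa'$; the pigeonhole in $M_1\times M_2$ then pumps the outer parts into \patas{\kappa'}. No Ramsey argument appears, and that is precisely how the polynomial bound $(2|M_1||M_2|+1)(|A|+1)^2$ arises.

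The concrete gap is your step~(ii). You claim that every word $(p+1)$-implements an unambiguous template of length at most $(|A|+1)^2$ and attribute the polynomial bound to the unambiguity constraint on templates. But the templates of Section~\ref{sec:separation} are subject to the \emph{same} unambiguity constraint, and there the bound used is the Ramsey-type quantity $N_A=2^{2^{|A|}|A|(p|A|+1)}$; implementing a template is a property of the word alone, so switching the separator class from pieces to unambiguous products cannot shorten the template a given word implements. If you fall back to the honest bound $N_A$, then your step~(iii), whose $\kappa$ must grow at least linearly in the template length (as in Lemma~\ref{lem:eqimp}), forces a $\kappa$ far larger than the one in the statement, and the theorem as stated is not obtained. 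Two further steps would also need genuine proofs: your pumping step~(i) asserts that replacing each set block by a longer $B_i$-pattern preserves $\fodeq{\kappa'}$, which for unambiguous products (unlike for pieces) does not follow from matching block alphabets --- this is exactly the content of Lemma~\ref{lem:pattfo2} for $B=A$, and for proper subalphabets it is what the paper's alphabet induction is designed to supply; and your step~(iii) is acknowledged as the heaviest part but no argument is given. I recommend switching to the induction on $|A|$ with canonical decompositions.
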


As in the previous section, Condition~\ref{item:b} yields an
algorithm for testing whether two languages are separable. Indeed,
the algorithm simply computes all languages in $\UL[\kappa]$ and
checks for each of them whether it is a separator.

\begin{corollary} \label{cor:decidfod}
It is decidable whether two regular languages can be separated by a
unambiguous language.
\end{corollary}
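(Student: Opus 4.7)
The plan is to deduce the corollary directly from the equivalence \ref{item:a}$\Leftrightarrow$\ref{item:c} of Theorem~\ref{thm:sepfod}, which reduces the existence of \emph{some} unambiguous separator to testing whether one concrete candidate, namely $\fdclos{L_1}{\kappa}$, separates $L_1$ from $L_2$ for a computable threshold~$\kappa$.

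Concretely, from the two input regular languages I would first compute finite monoids $M_1,M_2$ recognizing $L_1,L_2$ (together with their recognizing morphisms and accepting subsets), which is effective by the standard construction from an NFA, and then set $\kappa=(2|M_1||M_2|+1)(|A|+1)^2$ as prescribed by Theorem~\ref{thm:sepfod}. Next I would build $\fdclos{L_1}{\kappa}$ as a regular language. The relation \kfodeq has finite index: whether $u\kfodeq v$ holds is determined by which unambiguous products of size at most $\kappa$ over $A$ each of $u,v$ satisfies, and there are only finitely many such products over the finite alphabet~$A$. Each such product is a regular language, so \kfodeq-equivalence is decidable. One can therefore enumerate representatives of all \kfodeq-classes, decide for each class, by a standard regular emptiness-of-intersection test, whether it meets $L_1$, and take the finite union of those classes that do: this union is exactly $\fdclos{L_1}{\kappa}$, and is effectively computable as, say, an NFA.

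Finally, I would test whether $\fdclos{L_1}{\kappa}\cap L_2=\varnothing$, which is decidable since both languages are regular. By Theorem~\ref{thm:sepfod}, an empty intersection certifies that $\fdclos{L_1}{\kappa}$ itself is a \UL-separator, while a nonempty intersection rules out every \UL-separator, because any $\UL[\kappa]$-language that contains $L_1$ must contain the smallest such language $\fdclos{L_1}{\kappa}$, and by the equivalence \ref{item:a}$\Leftrightarrow$\ref{item:b}, non-$\UL[\kappa]$-separability implies non-\UL-separability. The main obstacle is not correctness but complexity: enumerating unambiguous products up to size $\kappa$ blows up multiply exponentially in $|A|$, which accounts for the $3$-\textsc{Nexptime} bound alluded to earlier in the paper; for mere decidability, however, this brute-force enumeration already suffices.
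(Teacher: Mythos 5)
Your proposal is correct and follows essentially the same route as the paper: both derive decidability directly from Theorem~\ref{thm:sepfod} together with the fact that there are only finitely many unambiguous products of size at most the computable bound $\kappa$, so that a brute-force test over $\UL[\kappa]$ settles separability. The only (inessential) difference is that you test the single canonical candidate $\fdclos{L_1}{\kappa}$ via condition~(3), whereas the paper enumerates all languages of $\UL[\kappa]$ via condition~(2); both reduce to effective constructions on regular languages and an emptiness-of-intersection test.
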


Observe that in contrast to Theorem~\ref{thm:seppiece}, the
bound $\kappa$ of Theorem~\ref{thm:sepfod} is stated in terms of
monoids rather than in terms of automata which means an exponential
blow-up. This is necessary for our proof technique to work. 

Another remark, is that by definition of $\UL[\kappa]$, the bound
$\kappa$ is defined in terms of unambiguous products. A rephrasing of
the theorem would be: there exists a separator iff there exists one
defined by a boolean combination of unambiguous products of size
$\kappa$. It turns out that $\kappa$ also works for \fodw, \emph{i.e.},
there exists a separator iff there exists one defined by an \fodw-formula of quantifier rank $\kappa$. This can be proved by making
minor adjustements to the proof of Theorem~\ref{thm:sepfod}.

The proof of Theorem~\ref{thm:sepfod} is inspired from techniques used
in~\cite{psfo2} and relies heavily on the notion of \patts{B}{p}. The
full proof, which works by induction on the size of the alphabet, can be
found in the Appendix. We actually prove a proposition that is
basically a rephrasing of Theorem~\ref{thm:sepfod} as a pumping-like
property. In the remainder of this section, we state this proposition
and explain why it implies Theorem~\ref{thm:sepfod}.

Fix two regular languages $L_1,L_2$ over  $A$, as well as their
syntactic monoids $M_1,M_2$, and the corresponding
morphisms~$\alpha_1,\alpha_2$.

\begin{proposition}
\label{prop:fodtrans}
Let $\kappa=(2|M_1||M_2|+1)(|A|+1)^2$. For all pairs of words $w_1
\kfodeq w_2$ and all $\kappa' > \kappa$,
there exists $w'_1 \fodeq{\kappa'} w'_2$ such that
$\alpha_1(w_1)=\alpha_1(w'_1)$ and $\alpha_2(w_2)=\alpha_2(w'_2)$.
\end{proposition}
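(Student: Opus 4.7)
The plan is to prove the proposition by induction on the alphabet size $|A|$, mirroring the structure of Section~\ref{sec:separation} but using $(B,p)$-patterns as the pumpable skeletons supporting unambiguous-product equivalence. The base case $|A|=0$ is immediate. For the induction step, fix $w_1 \kfodeq w_2$ and a target $\kappa' > \kappa$.

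First I would extract from $w_1$ and $w_2$ a common "spine" of letters $a_1,\ldots,a_r$ together with aligned blocks $z_1^{(j)}, z_2^{(j)}$ living over strict sub-alphabets $B_j \subsetneq A$. This uses the standard left/right factorization property of $\fodeq{}$: whenever $a \in \content{w_i}$ and $w_i = u_i a v_i$ is the leftmost $a$-decomposition (so $u_i \in (A\setminus\{a\})^*$), the hypothesis $w_1 \fodeq{k} w_2$ forces $u_1 \fodeq{k-1} u_2$ over $A\setminus\{a\}$ and $v_1 \fodeq{k-1} v_2$ over $A$, and symmetrically for the rightmost occurrence. Iterating this decomposition from both sides exhausts the alphabet in at most $2(|A|+1)$ recursive steps, and the quadratic factor $(|A|+1)^2$ in the definition of $\kappa$ is precisely the budget needed to keep each residual equivalence index above the inductive threshold $(2|M_1||M_2|+1)(|B_j|+1)^2$.

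Next I would apply the induction hypothesis to each pair $(z_1^{(j)}, z_2^{(j)})$ over its smaller alphabet $B_j$, producing blocks $\tilde z_i^{(j)}$ with $\alpha_i(\tilde z_i^{(j)}) = \alpha_i(z_i^{(j)})$ and $\tilde z_1^{(j)} \fodeq{\kappa'} \tilde z_2^{(j)}$ over $B_j$. I would then inflate each $\tilde z_i^{(j)}$ into a $(B_j, p')$-pattern of arbitrarily large power $p'$ while preserving its image in $M_1 \times M_2$: because the factor $2|M_1||M_2|+1$ in $\kappa$ exceeds $2|M_1 \times M_2|$, the pigeonhole principle produces inside the block a factorization $x \cdot e \cdot y$ where $e$ has the same image as its own square in $M_1 \times M_2$, so $e$ can be repeated as many times as desired without altering $(\alpha_1,\alpha_2)$. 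Concatenating the inflated blocks with the spine letters $a_j$ yields $w'_1, w'_2$ with the desired syntactic images.

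The main obstacle will be the final step: upgrading the block-wise equivalences $\tilde z_1^{(j)} \fodeq{\kappa'} \tilde z_2^{(j)}$ to a full $w'_1 \fodeq{\kappa'} w'_2$. The key observation is that an unambiguous product of size at most $\kappa'$ matching $w'_1$ can be split along the high-power $(B_j, p')$-pattern blocks into an unambiguous sub-product for each block, together with a spine structure; inflating to $p' \gg \kappa'$ guarantees that each sub-product has enough room inside its block to absorb all its matches, so the split is canonical and block-wise equivalence propagates to the concatenation. This compatibility-with-concatenation lemma is where the precise constants in $\kappa$ must be verified, and is the technical core to be carried out in detail in the Appendix.
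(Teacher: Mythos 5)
Your high-level strategy (induction on $|A|$, blocks over smaller alphabets, induction hypothesis on the blocks, pumping to preserve syntactic images, reconcatenation) is the same as the paper's, but there is a genuine gap at the decomposition step which your concluding ``compatibility-with-concatenation'' lemma cannot repair. Iterating the leftmost/rightmost $a$-factorization from both ends peels off only a bounded number of blocks over strict subalphabets and, in general, leaves a \emph{middle} factor whose alphabet is still all of $A$ (take $w=(a_1\cdots a_m)^N$ for large $N$). The induction hypothesis does not apply to that middle factor, and this is precisely the hard case. The paper isolates it by a case distinction on whether $w_1,w_2$ are $2k$-patterns over the full alphabet, and resolves it with Lemma~\ref{lem:pattfo2}: after pumping, the flanking parts of $w'_1,w'_2$ are turned into $\kappa'$-patterns over the \emph{whole} alphabet $A$, whereupon the middle can be replaced by anything without affecting $\fodeq{\kappa'}$ --- the point being that an unambiguous product has at most one block $B_i=A$ (Remark~\ref{rem:unam}), so a word with $\kappa'$-pattern prefix and suffix forces its entire middle into that unique block. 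Your plan only produces high-power patterns over \emph{strict} subalphabets $B_j$, and for those the absorption argument fails: a product of size $\kappa'$ may contain arbitrarily many blocks equal to $B_j$, so the split of the product along your blocks is not canonical, and no inflation of the power $p'$ makes it so.

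A second, smaller issue is the assertion that blockwise $\fodeq{\kappa'}$-equivalence propagates to the concatenation. In the paper this does not come from the blocks being patterns but from the decomposition being the \emph{canonical} one detectable by unambiguous products (Lemma~\ref{lem:foddecompos}): the separating letters sit at the positions where the alphabet completes, so any product containing $w'_1$ is forced to split along them. With your spine, you would need to prove the analogue of Lemma~\ref{lem:foddecompos} for the leftmost/rightmost decomposition (plausible, but it is the place where your budget $(|A|+1)^2$ must actually be checked rather than asserted) --- and even granting it, you remain stuck on the full-alphabet middle described above.
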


Let us briefly explain why Proposition~\ref{prop:fodtrans} implies
Theorem~\ref{thm:sepfod}. We explain why the direction
\ref{item:a}$\implies$\ref{item:c} holds, since \ref{item:c}$\implies$\ref{item:b} and
\ref{item:b}$\implies$\ref{item:a} are trivial. We prove the
contrapositive: assume that
$\fdclos{L_1}{\kappa}$ is not a separator. Hence, by definition of 
$\fdclos{L_1}{\kappa}$, there exist $w_1 \in L_1$ and $w_2 \in L_2$
such that $w_1 \kfodeq w_2$. Now, let $\kappa' > \kappa$. From
Proposition~\ref{prop:fodtrans}, it follows that there exist $w'_1 \in
L_1$ and $w'_2 \in L_2$ such that  $w'_1 \fodeq{\kappa'} w'_2$. This
means that $L_1$ and $L_2$ cannot be separated by a language of
$\UL[\kappa']$. Since this holds for all $\kappa'>\kappa$, it follows that  $L_1$ and $L_2$ are not \UL-separable.

\section{Conclusion}

We proved separation results for both piecewise testable and
unambiguous languages. Both results provide a way to decide
separability. In the \PT case, we even prove that this
can be done in {\sc Ptime}. Moreover, in both cases we give an insight
on the actual separator by providing a bound on its size should it
exist.

There remain several interesting questions in this field. First, one
could consider other subclasses of regular languages, the most
interesting one being full first-order logic. Separability by
first-order logic has already been proven to be decidable using 
semigroup theory~\cite{Henckell:1988}. However, this approach is
difficult to understand, it yields a costly algorithm, it only
provides a yes/no answer, and no insight about a possible separator. Another question is to get tight complexity bounds. For
unambiguous languages for instance, one can show a 3\textsc{Nexptime}
bound using translation to automata, but in this case, and even for
piecewise testable languages, we do not know any tight bounds on the
size of~separators.

Another observation is that right now, we have no general approach and
are bound to use \emph{ad-hoc} techniques for each subclass. An interesting
direction would be to invent a general framework that is suitable for
this problem in the same way that monoids are a suitable
framework for decidable characterizations.

\bibliographystyle{abbrv}
\bibliography{PTFO2-Arxiv}

\newpage
\section*{Appendix}
\appendix

\section*{A: Proofs of Section~\ref{sec:separation}}

\subsection*{Proof of Condition~\ref{item:9} Theorem~\ref{thm:seppiece}}

We prove Condition~\ref{item:9}. We prove
\ref{item:9}$\implies$\ref{item:6} and
\ref{item:7}$\implies$\ref{item:9}.

\medskip

\noindent
{\bf \ref{item:9}$\implies$\ref{item:6}.} We proceed by
contraposition. Assume that $L_1,L_2$ are not \PT-separable.
Recall that $\mathcal{A}_1$ and $\mathcal{A}_2$ are NFAs for $L_1,L_2$
and that $k_1,k_2$ are their sizes. Set $p=\max(k_1,k_2)+1$ and
$\ell=2^{2^{|A|}|A|(p|A|+1)}$. We prove that $L_1,L_2$ both contain
\impls{p} of some \temp{N_A} $T$ and use $T$ to construct a witness of
non \PT-separability in $(\mathcal{A}_1,\mathcal{A}_2)$.

Let $\kappa$ be as defined in Lemma~\ref{lem:eqimp} from $\ell$ and
$p$. Because $L_1,L_2$ are not \PT-separable, there exist $w_1 \in
L_1$ and $w_2 \in L_2$ such that $w_1 \peq{\kappa} w_2$. By choice of
$\ell,p$ and Lemma~\ref{lem:ramseysepar}, $w_1$ must be the \impl{p} of
some unambiguous \temp{\ell} $T$. Applying Lemma~\ref{lem:eqimp}, we
obtain that $w_1,w_2$ are both \impls{(\max(k_1,k_2))} of $T$.

Let $\vec B=(B_1,\ldots,B_n)$ be the subsequence of elements $T$ that
are sets. Let $\vec u=(u_0,\ldots,u_n)$, where $u_i$ is the word
obtained by concatenating the letters that are between $B_i$ and $B_{i+1}$
in $T$.  By definition $(\vec u, \vec B)$ is a factorization pair.

Because $w_1$ is a \impl{(\max(k_1,k_2))}, the path used to read $w_1$
in $\mathcal{A}_1$  must traverse loops labeled by each of the $B_i$,
and clearly this is a $(\vec u,\vec B)$-path. Using the same argument
we get that the path of $w_2$ in $\mathcal{A}_2$ is also a $(\vec
u,\vec B)$-path. Therefore $(\vec u, \vec B)$ is a witness of non
\PT-separability.

\medskip

\noindent
{\bf \ref{item:7}$\implies$\ref{item:9}.} Again, we proceed
by contraposition. Set $\kappa$ as in Theorem~\ref{thm:seppiece} and
assume that there exists a factorization pair  $(\vec u, \vec B)$ that
is a witness of non \PT-separabilty. We prove that
$\ptclos{L_1}{\kappa}$ is not a separator.

Set $\vec B=(B_1,\ldots,B_n)$ and $\vec u=(u_0,\ldots,u_n)$. By
definition, this means that there exists $w_1 \in L_1$ and $w_2 \in
L_2$ of the form

\[
u_0v_iu_1v_1 \cdots v_nu_n,
\]

\noindent
where the words such that $\content{v_i}=B_i$ and $v_i$ contains a
$\patt{B_i}{\kappa}$. It is straightforward to see that $w_1 \kpeq
w_2$. Therefore, $w_2 \in L_2 \cap \ptclos{L_1}{\kappa}$ and
$\ptclos{L_1}{\kappa}$ is not a separator.

\subsection*{Proof of Lemma~\ref{lem:eqimp}}

\setcounter{lemma}{5}
\begin{lemma} \label{lem:eqimp2}
  Let $\ell,p \in \mathbb{N}$. From $p$
  and $\ell$, we can compute $\kappa$ such that for every pair of words $w \peq{\kappa} w'$ and every
  unambiguous \temp{\ell} $T$, $w'$ is
  a \kimpl of $T$ whenever $w$ is a \impl{(p+1)} of $T$.
%
\end{lemma}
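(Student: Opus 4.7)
The plan is to prove the lemma by induction on the template length $\ell$, with $\kappa=\kappa(\ell,p)=|A|(p+1)\ell$---this is exactly the value that is plugged into Lemma~\ref{lem:eqimp} in the assembly of the main theorem immediately after its statement.  The base case $\ell=1$ reduces to two subcases.  If $t_1=a$ is a letter then $w=a$; the pieces $a$ present in and $aa$ absent from $w$ force $w'=a$ via $\peq\kappa$ with $\kappa\geq 2$.  If $t_1=B=\{b_1<\dots<b_r\}$ is a set then $w$ is a $(p+1)$-pattern of $B$, so $\content w\subseteq B$ and $(b_1\cdots b_r)^p$ is a piece of $w$; both properties transfer to $w'$ through $\peq\kappa$ with $\kappa\geq |A|p$, so $w'$ is a $p$-pattern of $B$.

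For the inductive step I peel off the last template element $t_\ell$, writing $w=u\cdot w_\ell$ canonically, and locate a matching split $w'=u'\cdot w'_\ell$ from pieces alone.  This is where the unambiguity of $T$ is essential: it supplies a \emph{boundary witness} at the $(t_{\ell-1},t_\ell)$ frontier whose size depends only on $|A|$ and $p$.  When $t_\ell=a$ is a letter, I take $w'_\ell=a$ to be the rightmost $a$ in $w'$: if $t_{\ell-1}$ is a set $B$ with $a\notin B$, forbidden pieces of the form $a\cdot c$ with $c$ appearing only ``late'' in $w$ force $w'$ to end in $a$, and if $t_{\ell-1}$ is a letter, the multiplicity of trailing $a$'s in $w$ is detected by pieces $a^k$ of size $\leq p+1$.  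When $t_\ell=B$ is a set, unambiguity provides a letter $d$ occurring in $w_{\ell-1}$ with $d\notin B$, and the analysis of pieces $b_r\cdot x$ with $b_r=\max B$ and $x\notin B$ (all absent from $w$ because nothing outside $B$ occurs after $w_\ell$) forces the maximal $B^*$-suffix of $w'$ to be exactly the intended factor $w'_\ell$.

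Given the split, I verify two properties.  First, $w'_\ell$ is a $p$-implementation of $t_\ell$: in the set case I exhibit the piece $(b_1\cdots b_r)^p$ inside $w'_\ell$ by prepending to it a short ``locator'' piece of $u$ of size $O(|A|)$, transferring the resulting piece to $w'$ via $\peq\kappa$, and using the locator to anchor the $(b_1\cdots b_r)^p$-part inside $w'_\ell$.  Second, $u\peq{\kappa(\ell-1,p)}u'$: any piece $q$ of $u$ of size $\leq\kappa(\ell-1,p)$ extends to a piece $q\cdot s$ of $w$, where $s$ is a canonical suffix piece of size at most $|A|(p+1)$ encoding the $(p+1)$-pattern structure of $w_\ell$; this extended piece transfers to $w'$, and the already-established structure of $w'_\ell$ forces the $q$-part to land inside $u'$.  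A symmetric argument gives the reverse inclusion, and the induction hypothesis applied to $u'$ and the template $t_1,\dots,t_{\ell-1}$ finishes the proof.

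The main obstacle I foresee is guaranteeing that each inductive step costs only an additive $|A|(p+1)$ in $\kappa$, so that the overall bound remains linear in $\ell$.  The unambiguity of $T$ is precisely what achieves this: it provides boundary witnesses at each frontier whose size depends only on $|A|$ and $p$, independently of the remaining template.  Without unambiguity, two consecutive comparable sets or a letter buried inside an adjacent set would destroy the bounded-size boundary witness, and the linear growth of $\kappa$ would collapse.
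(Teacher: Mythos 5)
Your overall architecture differs from the paper's: you induct on $\ell$ by peeling off $t_\ell$ and reducing to the prefix template, whereas the paper fixes a single greedy left-to-right decomposition $w'=w'_1\cdots w'_\ell$ and proves directly, by induction on the factor index, that each $w'_i$ contains the required \patt{p}{B}, using the notion of a piece \emph{incompatible} with an unambiguous template (a piece that no \impl{1} of $T$ can contain). The paper never needs to compare prefixes of $w$ and $w'$ under \peq{}; it only needs one-sided, existential facts about each factor of $w'$. Your route, by contrast, stands or falls on the intermediate claim $u\peq{\kappa(\ell-1,p)}u'$, and that is where there is a genuine gap.

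Concretely, to show that a piece $q$ of $u$ is a piece of $u'$ when $t_\ell=B$, you append a suffix $s$ of size $|A|(p+1)$ ``encoding the $(p+1)$-pattern structure of $w_\ell$'' and transfer $q\cdot s$ to $w'=u'\cdot w'_\ell$. But an embedding of $q\cdot s$ in $w'$ may place all of $s$, \emph{and} any $B$-letter suffix of $q$, inside $w'_\ell$: you only know $w'_\ell$ contains a \patt{p}{B} \emph{from below}; nothing bounds its pattern power from above, so it can absorb $s$ with room to spare, and the $q$-part is not forced into $u'$. The ``symmetric argument'' for the converse inclusion is not symmetric either: there you would need a suffix piece that exhausts the capacity of $w_\ell$, but the hypothesis only bounds the pattern power of $w_\ell$ from below as well. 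The same over-absorption problem undermines your claim that a ``locator'' piece of size $O(|A|)$ anchors $(b_1\cdots b_r)^p$ inside $w'_\ell$: a short locator only fixes the position of its own letters, after which the pattern may straddle the $u'/w'_\ell$ boundary. The paper's proof circumvents exactly this by carrying the \emph{full} prefix $v_1\cdots v_i$ of the witness piece $v_{T,p}$ (of size growing linearly in $i$, which is why $\kappa$ is linear in $\ell$) to push embeddings past the first $i$ factors, and by deriving contradictions from incompatible pieces rather than from an equivalence of prefixes. Your base case and the case $t_\ell=a$ a letter (where the letter, absent from the adjacent set by unambiguity, genuinely acts as a marker) are fine; the inductive step for $t_\ell$ a set is not established.
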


\begin{proof}
  We prove that the lemma holds for $\kappa=|A|p\ell$. 
   Let $w \peq{\kappa}
  w'$ and let $T=\{t_1,t_2,\dots,t_\ell\}$ be an unambiguous \temp{\ell} such that
  $w$ is a \impl{p+1} of $T$.
  We begin by giving a decomposition of $w'$
  and prove that it indeed witnesses the fact that $w'$ is a \kimpl of $T$.
  We define $w'_1\cdots w'_\ell=w'$ inductively as follows: assume that the
  factors are defined up to $w'_i$ and let $u$ be such that $w'=w_1\cdots
  w'_i \cdot u$. If $t_{i+1}$ is a letter then $w_{i+1}$ is just the
  first letter of $u$, otherwise $t_{i+1}=B \subseteq A$, and in that case
  $w_{i+1}$ is the largest prefix of $u$ which contains only letters of
  $B$. We will prove that $w'_1\cdots w'_\ell$ witnesses that
  $w'$ is a \kimpl of $T$. The proof relies on a subresult that we state
  and prove~below.

  To every unambiguous \temp{\ell} $T=\{t_1,t_2,\dots,t_\ell\}$ and 
  $p \in \mathbb{N}$, we associate a piece $v_{T,p}=v_1\cdots v_\ell$,
  such that for all $i$: 
  \[
  v_i = \left\{\begin{array}{ll} a & \text{if $t_i=a \in A$}\\(b_1 \cdots
      b_n)^p & \text{if $t_i=\{b_1,\dots,b_n\}\subseteq A$}\end{array}\right.
  \]
  By definition, if $w$ is a \kimpl of $T$ then $v_{T,p}$ is a piece of
  $w$. Consider $v_{T,1}=v_1\cdots v_\ell$. A piece $v$ is
  \emph{incompatible} with $T$ when $v$ is of the following form:
  $
  v = v_1\cdots v_i \cdot u \cdot v_{i+1} \cdots v_\ell
  $
  such that if $t_i$ (resp. $t_{i+1}$) is a set the first letter
  (resp. last letter) of $u$ is not in $t_{i}$ (resp. $t_{i+1}$).

  \begin{clm} \label{clm:unambig}
    If $w$ is a \impl{1} of some unambiguous \temp{\ell} $T$, then there
    is no piece of $w$ that is incompatible with $T$.
  \end{clm}

  \begin{proof}
    This is a consequence of the fact that $T$ is unambiguous.
  \end{proof}

  We now prove that for all $i$, $w'_i=a$ if $t_i$ is the letter $a$
  or $w'_i$ is a word over some $B\subseteq A$ containing a \patt{p}{B}
  if $t_i=B$. We proceed by induction, assume that this is true up to
  $w'_i$ and consider $w'_{i+1}$. Set $v_{T,1} = v_1 \cdots v_\ell$ and
  $v_{T,p+1}=v_1 \cdots v_\ell$. By induction hypothesis and by choice of
  $\kappa$, we know that $v_{i+1} \cdots v_\ell$ and $v_{i+1} \cdots v_\ell$
  are pieces of $w_{i+1} \cdots w_\ell$. We distinguish two cases depending
  on the nature of $t_{i+1}$.

  \medskip
  \noindent
  {\it Case 1: $t_{i+1}$ is some letter $a$.} We have to prove that
  $w'_{i+1}=a$. Assume that $w'_{i+1} = b \neq a$. Then $v_1 \cdots v_i
  \cdot b \cdot v_{i+1} \cdots v_\ell$ must be a piece of $w'$ and
  therefore a piece of $w$ ($w \peq{\kappa} w'$). We prove that this
  piece is incompatible with $T$, which contradicts
  Claim~\ref{clm:unambig}. Observe that by definition of $w_i$, if $t_i$
  is a set then $b \not\in t_i$ (otherwise it would have been included
  in $t_i$). Therefore $v_1 \cdots v_i \cdot b \cdot v_{i+1} \cdots v_\ell$
  is incompatible with $T$ and we are done for this case.

  \medskip
  \noindent
  {\it Case 2: $t_{i+1}$ is a set $B=\{b_1,\dots,b_n\}$.} By
  construction, $w_{i+1}$ contains only letters in $B$. Therefore, we
  have to prove  that it contains a \patt pB. Assume that it does
  not. By contruction, the first letter of $w_{i+2}$ is some letter
  $c \not\in B$. If $w_{i+1} = \varepsilon$, then $v_1 \cdots v_i
  \cdot c \cdot v_{i+1} \cdots v_\ell$ must be a piece of $w'$. Using
  a similar argument to the one of the previous case, we can prove
  that this piece is incompatible with $T$, which contradicts 
  Claim~\ref{clm:unambig}.

  Otherwise, let $b$ be the first letter of $w_{i+1}$. Recall that by
  definition, $r_{i+1}$ contains a \patt{p+1}{B} and that $r_{i+1}
  \cdots r_\ell$ is a piece of $w_{i+1} \cdots w_\ell$. Therefore, since
  $w_{i+1}$ does not contain a \patt{p}{B}, the last suffix of $r_{i+1}$
  containing a \patt{1}{B} must fall in $w_{i+2}$ and consequently,
  $v_{i+1} \cdots v_\ell$ must be a piece of $w_{i+2} \cdots w_\ell$. It
  follows that $v_1 \cdots v_i \cdot b \cdot c \cdot v_{i+1} \cdots
  v_\ell$ is a piece of $w'$ and of $w$ ($w \peq{\kappa} w'$). By definition,
  $c \not\in t_{i+1}$. Moreover, by construction, if $t_i$ is a set,
  then $b \not\in t_i$. Therefore, $v_1 \cdots v_i \cdot b \cdot c \cdot
  v_{i+1} \cdots v_\ell$ is incompatible with $T$ which contradicts
  Claim~\ref{clm:unambig} since it is also a piece of $w$.
\end{proof}

\section*{B: Proofs of Section~\ref{sec:FO2}}
\label{sec:a1}

\subsection*{Complexity Analysis}

We briefly explain how \UL-separability can be checked in
3\textsc{Nexptime}. This is a very rough analysis and it is likely
that this can be improved. We use the connection between \UL and
\fodw to prove that any language in $\UL[\kappa]$ is recognized by
a deterministic automaton that is $3$-exponential in $\kappa$. In
order to check separability it is then enough to non-deterministically
guess an automaton of size $3$-exponential in $\kappa$, check if the
automata recognizes an unambiguous language and check if it is a
separator. This can be done in non-deterministic $3$-exponential
time.

It is straightforward to see that any unambiguous product can be
described by an \fodw formula whose nesting depth of quantifiers is
polynomial in the size $\kappa$ of the product. One can then construct
an equivalent unary temporal logic ({\bf UTL}) formula  that is
exponential in size~\cite{evwutl}. From this {\bf UTL} formula, it is
then possible to construct an equivalent deterministic automaton that
is double exponential in size~\cite{dgtemporal}. This automaton is
$3$-exponential in $\kappa$.

\subsection*{Proof of Theorem~\ref{thm:sepfod}}

We prove Proposition~\ref{prop:fodtrans}. As we explained, our proof
relies heavily on the notion of \patts{B}{p}. However, in this case we
will only need to use \patts{B}{p} where $B$ is the whole alphabet
$A$. For this reason, we simply write \pata{p} for \patt{A}{p}.
We begin by giving a brief outline
of the proof. We fix a large enough $l$, intuitively $l$ needs to be
large enough so that we can apply our pumping arguments to words of
length $l$. Then, we show that $\kappa$ is large enough in order to
ensure that $w_1,w_2$ are both a \pata{2l} or neither of them are. In
both cases, we are then able to decompose $w_1,w_2$ into sequences of
factors that are pairwise equivalent and use a smaller alphabet. We
then apply induction on these factors. In the second case, it then
suffices to reconcatenate the factors to obtain the desired result. In
the first case, a pumping argument depending on $l$  will also be
needed. Before providing the lemmas involved in the construction, we
first give some additional definitions concerning \kpatas that will
come in conveniently.
  
\medskip
\noindent
{\bf Decompositions for \kpatas.} Recall that by definition, a word
$w$ is a \kpata iff $w \in (A^*a_1 \cdots A^*a_nA^*)^k$. This means
that $w$ can be decomposed into a sequence of factors witnessing its
membership in $(A^*a_1 \cdots A^*a_nA^*)^k$:
\[
w=\prod_{i=1}^{kn}(w_i \cdot a_{i\ mod\ n}) \cdot w_{kn+1}.
\]
To each word $w$ we associate a unique such decomposition for the
largest integer $k$ such that $w$ is a \kpata. Let $w \in A^*$, we say
that $w$ \emph{admits a \kdecop} iff $w$ is a \kpata but not a
\pata{k+1}. It is straightforward to see that if $w$ admits a \kdecop
then there exists an integer $l$ such that $kn < l < (k+1)n$,~and
\begin{equation} \label{eq:decompl}
w=\prod_{i=1}^{l}(w_i \cdot a_{i\ mod\ m}) \cdot w_{l+1},
\end{equation}
such that for all $i$, $a_{i\ mod\ m}\not\in w_i$. The integer $l$ is
called the \emph{length} of the decomposition. We give an example of a
word that admits a \decop{1} (of length $5$) in Figure~\eqref{fig:kdecomp}.
\begin{figure}[h]
  \begin{center}
    \begin{tikzpicture}
      \node (w0) at (0.0,0.0) {bc{\bf a}c{\bf b}b{\bf c}cc{\bf a}cc{\bf b}aa};
      \node[rotate=90] (p1) at ($(w0.south)-(1.08,0.0)$) {$\{$};
      \node at ($(p1)-(0.0,0.2)$) {$w_1$};
      \node[rotate=90,yscale=0.5] (p2) at ($(w0.south)-(0.66,0.0)$) {$\{$};
      \node at ($(p2)-(0.0,0.2)$) {$w_2$};
      \node[rotate=90,yscale=0.5] (p3) at ($(w0.south)-(0.30,0.0)$) {$\{$};
      \node at ($(p3)-(0.0,0.2)$) {$w_3$};
      \node[rotate=90] (p4) at ($(w0.south)+(0.1,0.0)$) {$\{$};
      \node at ($(p4)-(0.0,0.2)$) {$w_4$};
      \node[rotate=90] (p5) at ($(w0.south)+(0.59,0.0)$) {$\{$};
      \node at ($(p5)-(0.0,0.2)$) {$w_5$};
      \node[rotate=90] (p6) at ($(w0.south)+(1.10,0.0)$) {$\{$};
      \node at ($(p6)-(0.0,0.2)$) {$w_6$};
    \end{tikzpicture}
  \end{center}
  \caption{$w=bcacbbcccaccbaa$ over $A=\{a,b,c\}$ admits a \decop{1}}
  \label{fig:kdecomp}
\end{figure}

We now state two lemmas that are needed for our construction. Both
lemmas link \decops{k} to unambiguous languages. The first one states
that \kdecops can be detected using an unambiguous product with large
enough size. Moreover, using products of the same size, one can also
describe the products that are satisfied by the factors in the \kdecop.

\begin{lemma} \label{lem:foddecompos}
Let $k, \tilde{\kappa} \in \mathbb{N}$ such that $\kappa >
\tilde{\kappa} + k(|A|+1)$. Then for every pair of words $u \kfodeq v$
and all $h \leq k$, $u$ admits an \decop{h} iff $v$ admits an
\decop{h}. Moreover, the associated decompositions, as described
in~\eqref{eq:decompl}, are of the same length $l$:
\[
\begin{array}{lcl}
u & = & \prod_{i=1}^{l}(u_i \cdot a_{i\ mod\ m}) \cdot u_{l+1} \\
v & = & \prod_{i=1}^{l}(v_i \cdot a_{i\ mod\ m}) \cdot v_{l+1}
\end{array}
\]
and for all $i$, $u_i \fodeq{\tilde{\kappa}} v_i$.
\end{lemma}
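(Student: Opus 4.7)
The plan is to read the decomposition data off a word via unambiguous products of bounded size and then transfer product satisfaction between corresponding factors using a splitting argument. First, I would observe that for every $p \leq k+1$, being a $p$-pattern is equivalent to the cyclic piece $(a_1 \cdots a_m)^p$ occurring in the word, captured by a single unambiguous product of size $pm$: one takes $(A^* a_1 A^* \cdots A^* a_m A^*)^p$ and restricts each intermediate star-alphabet to the letters distinct from the next required one. Similarly, the exact length $l$ of the unique decomposition is captured by whether the length-$l$ cyclic prefix of $a_1 a_2 \cdots a_m a_1 \cdots$ is a piece of $w$, which is an unambiguous product of size $l$. Since $\kappa > \tilde{\kappa} + k(|A|+1)$ strictly exceeds both $(k+1)|A|$ and the possible decomposition lengths, the hypothesis $u \fodeq{\kappa} v$ forces $u$ and $v$ to admit $h$-decompositions simultaneously for every $h \leq k$, with the same length $l$.

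For the factor equivalence $u_i \fodeq{\tilde{\kappa}} v_i$, I would use a splitting principle. Given an unambiguous product $P$ of size $\tilde{\kappa}$ satisfied by $u_i$, the idea is to embed $P$ into an ambient unambiguous product $P'$ on all of $u$ that threads $P$ between the $(i{-}1)$-th and $i$-th decomposition boundary letters. The star-alphabets of $P'$ outside the $P$-slot are chosen so as to pin down the greedy cyclic sequence of boundary letters; this is possible because each factor $u_j$ lies in the restricted alphabet $A \setminus \{a_{j \bmod m}\}$, so the star-alphabets of $P'$ can be set to match. Provided the size of $P'$ stays below $\kappa$, the hypothesis $u \fodeq{\kappa} v$ transfers $P'$ to $v$, and unicity of the decomposition in $v$ forces $v_i$ to satisfy $P$. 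A symmetric argument yields the full equivalence $u_i \fodeq{\tilde{\kappa}} v_i$.

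The main obstacle is to match the stated bound $\kappa > \tilde{\kappa} + k(|A|+1)$ precisely. A naive accounting of one product letter per boundary gives size $\tilde{\kappa} + l$, with $l$ as large as $(k+1)|A| - 1$, which can exceed $\tilde{\kappa} + k(|A|+1)$ when $|A|$ is large relative to $k$. The needed gain comes from an amortization argument: by reusing the restricted star-alphabets already imposed by the cyclic structure of the decomposition letters, the cost of the $k$ full cycles of boundary letters fits into $k(|A|+1)$ rather than the naive $(k+1)|A|$. Verifying this amortized bookkeeping, while also checking that $P'$ remains genuinely unambiguous at each step of the construction, is the heart of the proof; everything else is an organizational wrapper around these observations.
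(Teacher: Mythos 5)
Your argument is the paper's argument. The detecting product you describe, $\prod_{i=1}^{l}\bigl((A\setminus\{b_i\})^*\,b_i\bigr)\,(A\setminus\{b_{l+1}\})^*$ with $b_i=a_{i\bmod m}$, is exactly the product $P$ in the paper's proof of the first assertion, and your ``splitting principle'' --- thread a product $P'$ satisfied by $u_i$ into an ambient unambiguous product on all of $u$ between the $(i{-}1)$-th and $i$-th boundary letters, transfer it to $v$ via $u\kfodeq v$, then use unambiguity to force the distinguished slot of $v$'s factorization to coincide with $v_i$ --- is precisely the paper's product $P''$. So the approach matches.

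Two remarks on the points you single out as the heart of the matter. First, the amortization you are banking on does not exist: the size of an unambiguous product is its number of explicit letters, so the ambient product unavoidably has size $\tilde{\kappa}+l$, and for a word admitting a \decop{k} the greedy length $l$ genuinely reaches $(k+1)|A|-1$, which exceeds $k(|A|+1)$ as soon as $|A|>k+1$. You cannot ``reuse'' star-alphabets to lower the letter count. The paper has the same problem --- it asserts ``$P$ is of size $l\le k(|A|+1)$'' without justification --- and the honest fix is not a cleverer product but a corrected constant, e.g.\ requiring $\kappa>\tilde{\kappa}+(k+1)|A|$; this is harmless downstream, since both invocations of the lemma in the proof of Proposition~\ref{prop:fodtrans} have slack of order $(2k+1)(2m+1)$ between $\kappa$ and $\tilde{\kappa}$. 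Second, you are right that unambiguity of the ambient product needs checking, and the paper skips this too; it does hold after first normalizing $P'$ to a product over the subalphabet $A\setminus\{b_i\}$ (legitimate because $u_i,v_i\in(A\setminus\{b_i\})^*$), after which every boundary letter is matched greedily and the parse is deterministic left to right. Neither point changes the structure of the proof, which is the intended one.
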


\begin{proof}
For the sake of readability, for all $a \in A$ we will write $A_{a}$
for the alphabet $A \setminus \{a\}$ and for all number $i$, we will
write $b_i$ for $a_{i\ mod\ m}$. Assume that $u$ admits an \decop{h} and
consider the associated decomposition:

\[
u = \prod_{i=1}^{l}(u_i \cdot b_i) \cdot u_{l+1}.
\]

Consider the following unambiguous product:

\[
P = \prod_{i=1}^{l}(A_{b_{i}} \cdot b_{i})
\cdot A_{b_{l+1}}.
\]

By construction, $P$ is of size $l \leq k(|A|+1) < \kappa$ and $u \in
P$. Therefore, $v \in P$ and $v$ admits an \decop{h} together with the
associated decomposition: 

\[
v = \prod_{i=1}^{l}(v_i \cdot b_{i}) \cdot v_{l+1}.
\]

It remains to prove that for all $i$, $u_i \fodeq{\tilde{\kappa}}
v_i$. Assume that $u_i$ belongs to some unambiguous product $P'$ of
size $\tilde{\kappa}$. This means that $u$ is in the unambiguous
product:

\[
P'' = \prod_{j=1}^{i-1}(A_{b_{j}} \cdot b_{j})
\cdot (P' \cdot b_{i}) \cdot \prod_{j=i+1}^{l}(A_{b_{j}}\cdot b_j) \cdot A_{b_{l+1}}.
\]

Because $P''$ is of size at most $\tilde{\kappa} + k(|A|+1)$, we have
$v \in P''$. By assumption on the decomposition of $v$, this means
that $v_i \in P'$ and we are done.
\end{proof}


Our second lemma states that if a word has both a prefix and a suffix
that are large enough \kpatas, then the middle part has no influence on
membership in an unambiguous product.

\begin{lemma} \label{lem:pattfo2}
Let $\kappa' \in \mathbb{N}$, and let $u,v$ be two words that are both
\patas{\kappa'}. Then for all words $w_1,w_2$, the following
equivalence holds:
\[
u \cdot w_1 \cdot v \fodeq{\kappa'} u \cdot w_2 \cdot v.
\]
\end{lemma}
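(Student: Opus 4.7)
The plan is to reduce the lemma to showing that for every unambiguous product $P = B_0^* a_1 B_1^* \cdots B_{k-1}^* a_k B_k^*$ of size $k \leq \kappa'$, one has $uw_1v \in P$ iff $uw_2v \in P$, since $\fodeq{\kappa'}$-classes are determined by membership in such products. I would proceed by induction on $\kappa'$, peeling off one marked position of $P$ at each step using the pattern structure of $u$ (or symmetrically of $v$).

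For the base case $\kappa' = 1$, the assumption that $u$ and $v$ are $1$-patterns forces $\content{u} = \content{v} = A$. Size-$0$ products $B^*$ are immediate: $uw_iv \in B^*$ iff $B = A$, independently of $w_i$. For a size-$1$ product $P = B_0^* a B_1^*$, unambiguity forces $a \notin B_0$ or $a \notin B_1$; in the former case the marked $a$ is the first $a$ of $uw_iv$, which lies in $u$, and the suffix after it must be in $B_1^*$, which forces $B_1 = A$ since $v$ uses every letter, so membership depends only on $u$ and $P$. The other case is symmetric.

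For the inductive step, suppose $uw_1v \in P$ with $\kappa' \geq 2$. Write $u = u^{(1)} u^{(2)} \cdots u^{(\kappa')}$ as a concatenation of $\kappa'$ cycles $u^{(j)} \in A^* a_1 A^* a_2 \cdots A^* a_m A^*$. Assuming $a_1 \notin B_0$ (otherwise peel from the right using $v$ and $a_k$), the first marked $a_1$ in any factorization of $uw_1v$ witnessing $P$ coincides with the first occurrence of $a_1$ in $uw_1v$, and this occurrence sits inside $u^{(1)}$. Decompose $u = u'a_1u''$ with $u'$ the prefix of $u$ preceding its first $a_1$: either $\content{u'} \not\subseteq B_0$, in which case neither $uw_1v$ nor $uw_2v$ lies in $P$ and the equivalence holds vacuously, or $u' \in B_0^*$ and $u''$ contains $u^{(2)}\cdots u^{(\kappa')}$ as a suffix, so $u''$ is itself a $(\kappa'-1)$-pattern.

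Stripping $u'a_1$, $uw_1v \in P$ reduces to $u''w_1v \in P'$, where $P' = B_1^* a_2 B_2^* \cdots a_k B_k^*$ is an unambiguous product of size $k-1 \leq \kappa'-1$. Since $u''$ and $v$ are both $(\kappa'-1)$-patterns ($v$ being a $\kappa'$-pattern is automatically a $(\kappa'-1)$-pattern), the induction hypothesis gives $u''w_1v \fodeq{\kappa'-1} u''w_2v$, so $u''w_2v \in P'$ and hence $uw_2v = u'a_1u''w_2v \in B_0^* a_1 P' = P$, as required. The main obstacle will be the rigorous justification of the dichotomy "$a_1 \notin B_0$ or $a_k \notin B_k$" implied by unambiguity of $P$, together with the careful verification that the first marked position of the forced factorization of $uw_1v$ really does land inside $u^{(1)}$ (which rests on the fact that $u^{(1)}$ contains all letters of $A$ in a structured way and that $a_1 \notin B_0$ pins down the unique choice).
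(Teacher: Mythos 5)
Your overall strategy---reducing to membership in a single unambiguous product $P=B_0^*a_1B_1^*\cdots B_{k-1}^*a_kB_k^*$ and then inducting on $\kappa'$ by peeling a marked letter off one end---is genuinely different from the paper's argument, and its base case is sound, but the inductive step rests on a claim that is false: unambiguity does \emph{not} force ``$a_1\notin B_0$ or $a_k\notin B_k$'' once $k\geq 2$. For a concrete counterexample over $A=\{a,b,c\}$, take $P=A^*\,a\,\{c\}^*\,b\,\{b,c\}^*$: this product is unambiguous (in any witnessing factorization, the marked $a$ and the marked $b$ are the unique pair of positions carrying $a$ and $b$ with only $c$'s between them and no $a$ after), yet $a_1=a\in B_0=A$ and $a_2=b\in B_2=\{b,c\}$; moreover it is not vacuous for the lemma, since $uw_iv\in P$ whenever $v$ ends, say, with $ab$. (The simpler $\{a\}^*\,a\,\varnothing^*\,b\,\{b\}^*=a^+b^+$ is another unambiguous product violating the dichotomy.) The dichotomy does hold for $k=1$---if $a_1\in B_0\cap B_1$ then $a_1a_1$ has two factorizations---which is why your base case goes through, but for $k\geq 2$ your induction can get stuck at the very first step with neither end-letter pinned to a first or last occurrence. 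Since you yourself flag this dichotomy as the unproved ``main obstacle,'' this is a genuine gap, and it cannot be closed as stated.

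The paper's proof uses a different structural consequence of unambiguity: at most one block $B_i$ can equal the whole alphabet $A$ (Remark~\ref{rem:unam}). Since $uw_1v$ is a \pata{2\kappa'}, while a product of size $k\leq\kappa'$ all of whose blocks are proper subalphabets contains only \patas{\kappa'} at most, exactly one $B_i$ equals $A$; the parts of the witnessing factorization to the left and to the right of that block are then at most \patas{(\kappa'-1)} and therefore lie entirely within $u$ and within $v$ respectively, so the whole middle portion (containing $w_1$) sits inside $B_i^*=A^*$ and can be replaced by $w_2$ in one step, with no induction at all. If you wish to salvage a peeling argument, you would have to treat the case where $a_1\in B_0$ and $a_k\in B_k$ separately, which in effect brings you back to locating this unique full-alphabet block.
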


\begin{proof}
We begin by observing a simple property of unambiguous products.

\begin{myremark}  \label{rem:unam}
Let $B_0^*a_1B_1^*\cdots B_{\kappa'-1}^*a_{\kappa'}B_{\kappa'}^*$ be an
unambiguous product. There can be at most one set $B_i$ such that
$B_i = A$.
\end{myremark}

Let $P=B_0^*a_1B_1^*\cdots B_{\kappa'-1}^*a_{\kappa'}B_{\kappa'}^*$ an
unambiguous product of size $\kappa'$ and assume that $uw_1v \in
P$. We prove that $uw_2v \in P$. Since $uw_1v \in P$, there exists
some decomposition

\[
uw_1v=x_0a_1x_1 \cdots x_{\kappa'-1}a_{\kappa'}x_{\kappa'}
\]

\noindent
that is a witness. If no set $B_i$ is the whole alphabet $A$, then
$uw_1v$ is at most a \pata{\kappa'}, which is impossible since it is
by definition a \pata{2\kappa'}. Therefore, by Remark~\ref{rem:unam}
there is exactly one set $B_i$ such that $B_i=A$. It follows
that the words $x_0a_1x_1 \cdots x_{i-1}$ and $a_{i+1} \cdots
x_{\kappa'-1}a_{\kappa'}x_{\kappa'}$ are at most
\patas{\kappa'-1}. Therefore, they are respectively a prefix of $u$
and a suffix of $v$ (which are \patas{\kappa'}). It follows that there
exists some word $y_i$ such that

\[
uw_2v=x_0a_1x_1 \cdots x_{i-1}y_ia_{i+1} \cdots
x_{\kappa'-1}a_{\kappa'}x_{\kappa'}.
\]

Such a decomposition for $uw_2v$ is witness for membership in $P$. We
conclude that $uw_2v \in P$.

\end{proof}

We can now finish our construction by using
Lemma~\ref{lem:foddecompos} and Lemma~\ref{lem:pattfo2} and prove
Proposition~\ref{prop:fodtrans}. The proof goes by induction on the
size of the alphabet.

We begin by fixing the size of the patterns we are going to look for
in $w_1,w_2$.  Set $k = |M_1||M_2|$. A pigeonhole principle argument
proves that for $s_0,\dots,s_k \in M_1$ and $r_0,\dots,r_k \in M_2$,
there exist $i<j$ such that both $s_0\cdots s_{i-1}=s_0\cdots s_{j}$
and $r_0\cdots r_{i-1}=r_0\cdots r_{j}$. We will look for \patas{k}.

Observe that $\kappa = (2k+1)(m+1)^2$ (recall that $m = |A|$). We
prove Proposition~\ref{prop:fodtrans} by induction on $m$.
Let $\kappa' > \kappa$, $w_1 \kfodeq w_2$ and set
$\tilde{\kappa}=(2k+1)m^2$. By Lemma~\ref{lem:foddecompos}, either
both $w_1,w_2$ admit \decops{h} for some $h < 2k$ or both $w_1,w_2$ do not admit \decop{h} for $h \leq 2l$. We treat these cases~separately.
 
\medskip\noindent
{\it \bf Case 1: both $w_1$ and $w_2$ admit \decops{h} for $h < 2k$.}
Observe that $\tilde{\kappa}=(2k+1)m^2$ implies that $\kappa >
\tilde{\kappa} + 2k(m+1)$. Therefore, we can apply the second part of
Lemma~\ref{lem:foddecompos} to $w_1$ and $w_2$. This yields
the following decompositions:
\[
\begin{array}{lcl}
w_1 & = & \prod_{i=1}^{l}(u_i \cdot a_{i\ mod\ m}) \cdot u_{l+1} \\
w_2 & = & \prod_{i=1}^{l}(v_i \cdot a_{i\ mod\ m}) \cdot v_{l+1}
\end{array}
\]
such that for all $i$, $u_i \fodeq{\tilde{\kappa}} v_i$. Also observe
that by definition of these decompositions $u_i,v_i$ use a strict
subalphabet of $A$. Therefore, the induction hypothesis can be used
and for all $i$ we get words $u'_i,v'_i$ such that $u'_i
\fodeq{\kappa'} v'_i$, $\alpha_1(u_i)=\alpha_1(u'_i)$ and
$\alpha_2(v_i)=\alpha_2(v'_i)$. Now, consider the words:
\[
\begin{array}{lcl}
w'_1 & = & \prod_{i=1}^{l}(u'_i \cdot a_{i\ mod\ m}) \cdot u'_{l+1} \\
w'_2 & = & \prod_{i=1}^{l}(v'_i \cdot a_{i\ mod\ m}) \cdot v'_{l+1}
\end{array}
\]
By construction, we have $w'_1 \fodeq{\kappa'} w'_2$,
$\alpha_1(w_1)=\alpha_1(w'_1)$ and $\alpha_2(w_2)=\alpha_2(w'_2)$ and
we are done.

\medskip
\noindent
{\it \bf Case 2: both $w_1$ and $w_2$ do not admit \decops{h} for $h \leq 2k$.}
This means that $w_1,w_2$ are both \patas{2k}. A simple argument as in
the proof of Lemma~\ref{lem:foddecompos} shows that
$w_1,w_2$ can be decomposed into three factors:

\newcommand\frl{\ensuremath{\mathfrak{l}}\xspace}
\newcommand\frr{\ensuremath{\mathfrak{r}}\xspace}
\newcommand\frc{\ensuremath{\mathfrak{c}}\xspace}

\[
\begin{array}{lcl}
w_1 & = & u_{\frl} \cdot u_{\frc} \cdot u_{\frr} \\
w_2 & = & v_{\frl} \cdot v_{\frc} \cdot v_{\frr}
\end{array}
\]

such that $u_{\frl} \fodeq{\kappa - k(m+1)} v_{\frl}$, $u_{\frr} \fodeq{\kappa -
  k(m+1)} v_{\frr}$, $u_{\frc} \fodeq{\kappa - k(m+1)} v_{\frc}$ and
$u_{\frl},v_{\frl},u_{\frr},v_{\frr}$ admit \kdecops. We prove that it is possible to
construct $u'_{\frl},v'_{\frl},u'_{\frr},v'_{\frr}$  
such that:

\begin{enumerate}
\item $u'_{\frl} \fodeq{\kappa'} v'_{\frl}$ and $u'_{\frr}
  \fodeq{\kappa'} v'_{\frr}$.
\item $\alpha_1(u_{\frl})=\alpha_1(u'_{\frl})$,
  $\alpha_1(u_{\frr})=\alpha_1(u'_{\frr})$,
  $\alpha_2(v_{\frl})=\alpha_2(v'_{\frl})$ and
  $\alpha_2(v_{\frr})=\alpha_2(v'_{\frr})$.
\item $u'_{\frl},v'_{\frl},u'_{\frr},v'_{\frr}$ are all \patas{\kappa'}.
\end{enumerate}

We only give the proof for $u'_{\frl},v'_{\frl}$, as the proof for
$u'_{\frr},v'_{\frr}$ is identical. Observe that $\kappa - k(m+1) >
\tilde{\kappa} + k(m+1)$. Therefore, we can apply Lemma~\ref{lem:foddecompos}
to $u_{\frl}$ and $v_{\frl}$. This yields the following decompositions: 
\[
\begin{array}{lcl}
u_{\frl} & = & \prod_{i=1}^{l}(u_i \cdot a_{i\ mod\ m}) \cdot u_{l+1} \\
v_{\frl} & = & \prod_{i=1}^{l}(v_i \cdot a_{i\ mod\ m}) \cdot v_{l+1}
\end{array}
\]
such that for all $i$, $u_i \fodeq{\tilde{\kappa}} v_i$. Moreover the
length $l$ of the decompositions is $l > km$. Also observe
that by definition of these decompositions, $u_i,v_i$ use a strict
subalphabet of $A$. Therefore, the induction hypothesis can be used
and for all $i$ we get words $u'_i,v'_i$ such that $u'_i
\fodeq{\kappa'} v'_i$, $\alpha_1(u_i)=\alpha_1(u'_i)$ and
$\alpha_2(v_i)=\alpha_2(v'_i)$. We now use the fact that our choice of
$k$ allows us to pump the sequences of factors into large
sequences: there exist numbers $0 \leq j_1<j_2 \leq
k$ such~that
\[
\begin{array}{lcl }
\prod_{i=1}^{(j_1-1)n}\alpha_1(u'_i \cdot a_{i\ mod\ m}) & = & \prod_{i=1}^{j_2n}\alpha_1(u'_i \cdot a_{i\ mod\ m})\\
\prod_{i=1}^{(j_1-1)n}\alpha_2(v'_i \cdot a_{i\ mod\ m}) & = & \prod_{i=1}^{j_2n}\alpha_1(u'_i \cdot a_{i\ mod\ m})
\end{array}
\]
Now set,
\[
\begin{array}{lcl}
e_1 & = & \prod_{i=(j_1-1)n+1}^{j_2n}u'_i \cdot a_{i\ mod\ m}\\
e_2 & = & \prod_{i=(j_1-1)n+1}^{j_2n}v'_i \cdot a_{i\ mod\ m}
\end{array}
\]

\[
\begin{array}{lcl}
u'_{\frl}   & = & \prod_{i=1}^{j_1n}(u'_i \cdot a_{i\ mod\ m}) \cdot (e_1)^{\kappa'}
\cdot \prod_{i=j_2n+1}^{l}(u'_i \cdot a_{i\ mod\ m}) \cdot u'_{l+1}
\\
v'_{\frl}   & = & \prod_{i=1}^{j_1n}(v'_i \cdot a_{i\ mod\ m}) \cdot (e_2)^{\kappa'}
\cdot \prod_{i=j_2n+1}^{l}(v'_i \cdot a_{i\ mod\ m}) \cdot v'_{l+1} 
\end{array}
\]

Observe that by construction $u'_{\frl} \fodeq{\kappa'} v'_{\frl}$
(they are products 
of pairwise equivalent factors). Moreover, by construction of
$e_1,e_2$, we have
$\alpha_1(u_{\frl})=\alpha_1(u'_{\frl})$ and
$\alpha_2(v_{\frl})=\alpha_1(v'_{\frl})$. Finally, since $e_1,e_2$
are \patas{1}, $u'_{\frl},v'_{\frl}$ are \patas{\kappa'}. Using the
same construction for $u_{\frr},v_{\frr}$, we obtain
$u'_{\frr},v'_{\frr}$. We can now finish our construction by giving
$w'_1,w'_2$:

\[
\begin{array}{lcl}
w'_1 & = & u'_{\frl} \cdot u_{\frc} \cdot u'_{\frr} \\
w'_2 & = & v'_{\frl} \cdot v_{\frc} \cdot v'_{\frr}
\end{array}
\]

By construction it is clear that $\alpha_1(w_1)=\alpha_1(w'_1)$ and
$\alpha_2(w_2)=\alpha_2(w'_2)$. Moreover, because
$u'_{\frl},u'_{\frr}$ are \patas{\kappa'} it follows from
Lemma~\ref{lem:pattfo2} that
\[
u'_{\frl} \cdot u_{\frc} \cdot u'_{\frr} \fodeq{\kappa'} u'_{\frl} \cdot v_{\frc} \cdot u'_{\frr}
\]

\noindent
Finally, since $u'_{\frl} \fodeq{\kappa'} v'_{\frl}$ and $u'_{\frr}
\fodeq{\kappa'} v'_{\frr}$, we have
\[
u'_{\frl} \cdot v_c \cdot u'_{\frr} \fodeq{\kappa'} v'_{\frl} \cdot v_c \cdot v'_{\frr}
\]
Combining the two equivalences we obtain $w'_1 \fodeq{\kappa'} w'_2$,
which concludes the proof.


\end{document}